\documentclass[10pt,a4paper]{article}
\usepackage[T1]{fontenc}
\usepackage{amsmath}
\usepackage[cm]{fullpage}
\usepackage{amssymb}
\usepackage{indentfirst}
\usepackage{siunitx}
\usepackage{makecell}
\usepackage{float}

\usepackage{amsfonts}
\usepackage{graphicx}
\usepackage{amsthm}
\usepackage{enumitem}
\usepackage{mathtools}
\usepackage{bbm}
\newtheorem{theorem}{Theorem}[section]

\newtheorem{lemma}[theorem]{Lemma}

\usepackage[backend=biber,style=numeric]{biblatex}
\title{Log-linear Model for Dual System Estimation and Computational Considerations}
\author{Zhiyuan Lu}
\begin{document}
	\maketitle
	\indent Estimation of population sizes using multiple data sources has long been the practice of the Census Bureau. It is realistic to expect that even a census would not perfectly count every member of the population, and hence the use of a dual system estimator utilizing two independent population counts would provide more accurate figures. However, the use of multiple data sources cause issues when determining how much they agree. In particular, when dividing the population into distinct categories, the two sources can disagree on which category a particular person belongs to, or have missing category information on one or both sources for a recorded person. These issues have been described in the papers \cite{van2022multiple} and \cite{heijden2018overview}. This existing work has provided the framework of a statistical model and recommendation of an EM algorithm, which allows the estimation of category sizes among two data sources. 
	\newline
	\newline
	\indent A downside of using the EM algorithm is the computational time scaling rapidly as the number of categories increase: the EM algorithm can obtain quick estimates in under 1 second if there are two data sources dividing the population into two categories each, but if there are two data sources each divided into 50 categories, then the computation can require hours. An alternative method that can obtain estimates much faster is thus greatly appreciated, as it will make the analysis of population counts divided into many groups much more feasible.
	\newline
	\newline
	\indent This document will outline such a method. First the original statistical model and the EM algorithm will be described in detail, introducing notation that will be helpful in later sections. After this are sections describing the main mathematical results of this document, and conditions required for this result to hold. The most stringent condition required for the main result is that within the recorded data:
	\begin{itemize}
		\item for contingency counts recorded on both data sources, all combinations of known categories have above 0 counts
		\item for contingency counts recorded by one data source, all known categories of the recorded data source have above 0 counts.
	\end{itemize}
	Under this and several other conditions, the primary new result written here will enable a new method, one that can be argued to obtain the same answer as using the EM algorithm while requiring much less computational time. This is also confirmed in simulations, where the EM algorithm and this new algorithm obtain very close estimates when applied to randomly generated data, while the newer algorithm runs in about 1/100 of the time. A proof of the main result will finally be presented at the end of the document.

	\section{Loglinear Model of Contingency Table for Two Data Sources}
	The primary examples of contingency tables used in \cite{van2022multiple} and \cite{heijden2018overview} involved two data sources, $A$ and $B$, with data source $A$ dividing its counts among the variable $a$ and data source $B$ dividing its counts by $b$. The survey data are summarized in the form of counts for $x_{i,j,k,\ell}$, where $(i,j,k,\ell)$ are assorted values of $(A,B,a,b)$. Here, $i$ and $j$ can each take values of 0 or 1, corresponding to whether the count consists of data from register $A$ or $B$, or missing from $A$ or $B$. The category variables $a$ and $b$ can take a wider range of values, due to the potential of large number of categories and also due to counts from missing data entries. Here, the convention will be that $a$ takes values from $1,\dots,n_A$ for counts without missing information from $A$, and $-1,-2,\dots$ for counts with missing information from $A$. Similarly, $b$ will take values from $1,\dots,n_B$ for counts with no missing information from $B$ and $-1,-2,\dots$ for counts of entries with missing information from $B$.
	\begin{table}[h!]
		\centering
		\begin{tabular}{c | c | c c c | c}
			\quad	&	\quad	&	\makecell{$B=1$\\(on MOH)} 	&	\quad	&	\quad	&	\makecell{$B=0$\\(not on MOH)}  \\
			\hline
			\quad	&	\quad	&	\makecell{$b=1$\\(non-Maori)}	&	\makecell{$b=2$\\(Maori)}	&	\makecell{$b=-1$\\(ethnicity unknown)}	&	\makecell{$b=-1$\\(ethnicity unknown)}\\
			\hline
			\makecell{$A=1$ \\(on Census)}	&	\makecell{$a=1$\\(non-Maori)}	&\num{3004335}	& \num{31995}	& \num{150840}	& \num{38634}	\\
					&	\makecell{$a=2$\\(Maori)}	& \num{108189}	& \num{435465}	& \num{12405}	& \num{4368}	\\
					&	\makecell{$a=-1$\\(ethnicity unknown)}	& \num{16512}	& \num{2769}	& \num{900} 	& \num{438}		\\
			\hline
			\makecell{$A=0$\\(not on Census)} & \makecell{$a=-1$\\(ethnicity unknown)}	& \num{398838}	& \num{146976}	&	\num{24636}	&	-
		\end{tabular}
		\caption{ Example of a contingency table of counts from two sources of data, a census and ministry of health (MOH) dataset taken from New Zealand, used in \cite{van2022multiple}. Here $n_A=n_B=2$, there are 15 values of $x_{i,j,k,\ell}$ for $(i,j,k,\ell)\in Ind_{data}$. }
		\label{tab:vdh_data_examp}
	\end{table}
~\newline
	\indent The data set of contingency table counts is $\big(x_{i,j,k,\ell}: (i,j,k,\ell)\in Ind_{data}\big)$, for a collection of indices $Ind_{data}$. The aim is to take the available counts  $\big(x_{i,j,k,\ell}: (i,j,k,\ell)\in Ind_{data}\big)$ to estimate for the true counts $\big( y_{i,j,k,\ell}:(i,j,k,\ell)\in Ind_{full} \big)$, where $Ind_{full}:=\{0,1\}^2\times \{1,\dots ,n_A\}\times\{1,\dots,n_B\}$ are the indices of the complete contingency table the EM algorithm aims to estimate.
		\begin{table}[H]
		\centering
		\begin{tabular}{l|l|c c|c c}
			\quad	&	\quad	&	\makecell{$B=1$\\(on MOH)} 	&	\quad	&		\makecell{$B=0$\\(not on MOH)} \quad	& \\
			\hline
			\quad	&	\quad	&	\makecell{$b=1$\\(non-Maori)}	&	\makecell{$b=2$\\(Maori)}	&	\makecell{$b=1$\\(non-Maori)}	&	\makecell{$b=2$\\(Maori)}\\
			\hline
			\makecell{$A=1$ \\(on Census)}	&	\makecell{$a=1$\\(non-Maori)}	& \num{3170294.8}	& \num{33787.9}	& \num{38616.0}	& \num{411.6}	\\
			\quad 	&	\makecell{$a=2$\\(Maori)}	& \num{111242.5}	& \num{448084.8}	& \num{877.6}	& \num{3534.9}	\\
			\hline
			\makecell{$A=0$\\(not on Census)} & \makecell{$a=1$\\(non-Maori)}	& \num{402709.4}	& \num{10770.8}	& \num{4905.2}	& \num{131.2} \\
			\quad &	\makecell{$a=2$\\(Maori)}	& \num{14130.7}	& \num{142839.1}	& \num{111.5}	&	\num{1126.8}
		\end{tabular}
		\caption{Estimated values of $y_{i,j,k,\ell}$ for the 16 values of $(i,j,k,\ell)\in Ind_{full}$, using the raw data of $x_{i,j,k,\ell}$'s presented in Table \ref{tab:vdh_data_examp}. These values were obtained from \cite{van2022multiple}.  }
		\label{tab:vdh_output}
	\end{table}
	~\newline
	~\indent Usually, the index sets $Ind_{full}$ and $Ind_{data}$ are different sets with different indices as elements. In order to turn counts indexed by the former into counts indexed to the latter, a function is defined in order to link the collection of coordinates:
	\begin{equation}
		CD(i,j,k,\ell):=\left\{ \text{ indices in }Ind_{full}\text{ where the counts }x_{i,j,k,\ell}\text{ can be distributed to } \right\}.
	\end{equation}
	In practical settings, if a coordinate $\mathbf{u}\in Ind_{data}$ corresponds to a category without missing information, $CD(\mathbf{u})$ would just map to a singleton set (usually consisting of $\mathbf{u}$ itself; if $\mathbf{u}$ corresponds to a category with missing information, then $CD(\mathbf{u})$ maps to a set of all coordinates in $Ind_{full}$ that counts in $\mathbf{u}$ potentially can be distributed to. For example, in a data set where both registers $A$ and $B$ are split by 2 income levels (1-poor, 2-rich), then (1,1,1,-1) corresponds to someone recorded as rich in $A$ but of unknown income in $B$. People counted in this category could be of either category in $B$, and hence a possible assignment is $CD(\mathbf{u})=\{ (1,1,1,1),(1,1,1,2) \}$.
	\begin{table}[h!]
		\centering
		\begin{tabular}{c|c||c|c}
			\makecell{$(i,j,k,\ell)$\\value}		&	$CD(i,j,k,\ell)$	&	\makecell{$(i,j,k,\ell)$\\value}		&	$CD(i,j,k,\ell)$	\\
			\hline
			(1,1,1,1)		&	$\{(1,1,1,1)\}$										&	(1,1,1,2)		&	$\{(1,1,1,2)\}$		\\
			(1,1,2,1)		&	$\{(1,1,2,1)\}$										&	(1,1,2,2)		&		$\{(1,1,2,2)\}$		\\
			(1,1,1,-1)		&	$\{ (1,1,1,1),(1,1,1,2) \}$			&	(1,1,-1,1)	&	$\{ (1,1,1,1),(1,1,2,1) \}$	\\
			(1,1,2,-1)		&	$\{ (1,1,2,1),(1,1,2,1) \}$	&			(1,1,-1,2)		&	$\{ (1,1,1,2),(1,1,2,2) \}$				\\
			(1,1,-1,-1)		&	$\{ (1,1,1,1),(1,1,1,2),(1,1,2,1),(1,1,2,2) \}$		& 	&	\\
			(1,0,1,-1)		&	$\{ (1,0,1,1),(1,0,1,2) \}$			&	(1,0,2,-1)		&	$\{ (1,0,2,1),(1,0,2,2) \}$	\\
			(1,0,-1,-1)		&	$\{(1,0,1,1),(1,0,1,2),(1,0,2,1),(1,0,2,2)\}$	&	&	\\
			(0,1,-1,1)		&	$\{ (0,1,1,1),(0,1,2,1) \}$			&	(0,1,-1,2)		&	$\{ (0,1,1,2),(0,1,2,2) \}$	\\
			(0,1,-1,-1)		&	$\{(0,1,1,1),(0,1,1,2),(0,1,2,1),(0,1,2,2)\}$	&	&
		\end{tabular}
		\caption{ Elements of $Ind_{data}$ and which set they are mapped to using the $CD$ function, in the context of the data and indices shown in Tables \ref{tab:vdh_data_examp} and \ref{tab:vdh_output}.}
		\label{tab:CD_func_examples}
	\end{table}
	~\newline
	\indent A loglinear model is assumed for the set of true counts $\big( y_{i,j,k,\ell}:(i,j,k,\ell)\in Ind_{full} \big)$: each $y_{i,j,k,\ell}$ is generated from a Poisson distribution, with the mean determined by a set of variables corresponding to the \textbf{maximal} model described in \cite{heijden2018overview} and \cite{van2022multiple}. This relation is expressed as
	\begin{equation}\label{eq:count_lambda_mean}
		\log \left(\mathbb{E}\left[ y_{i,j,k,\ell} \right]\right)\sim \lambda_0+\lambda^A_i+\lambda^B_j+\lambda^{a}_k+\lambda^b_\ell+\lambda^{Ab}_{i,\ell}+\lambda^{Ba}_{j,k}+\lambda^{ab}_{k\ell},
	\end{equation}
	wtih the additional restriction that only the following variables are allowed to be nonzero:
	\begin{eqnarray}\label{eq:free_lambdas}
		&&\lambda_0,\lambda^A_1,\lambda^B_1,\nonumber\\
		&&\lambda^a_k\text{ for }k=2,\dots,n_A,\nonumber\\
		&&\lambda^b_\ell\text{ for }\ell=2,\dots,n_B,\nonumber\\
		&&\lambda^{Ab}_{1,\ell}\text{ for }\ell=2,\dots,n_B,\nonumber\\
		&&\lambda^{Ba}_{1,k}\text{ for }k=2,\dots,n_A,\nonumber\\
		&&\lambda^{ab}_{kl}\text{ for }k=2,\dots,n_A,\quad \ell=2,\dots,n_B,
	\end{eqnarray}
	for a total of $n_An_B+n_A+n_B$ free variables. Another equivalent way of expressing the relation in (\ref{eq:count_lambda_mean}), which will be used for the rest of this document, would be to consider the vector of all the variables in (\ref{eq:free_lambdas}):
	\begin{eqnarray}
		\mathbf{\Lambda}:=\left( \lambda_0,\lambda^A_1,\dots,\lambda^{ab}_{n_A,n_B}  \right),
	\end{eqnarray}
	where dimension of $\mathbf{\Lambda}$ is $\dim(\mathbf{\Lambda})=n_An_B+n_A+n_B$, and a function $VM:Ind_{full}\to\{ 0,1 \}^{\dim(\mathbf{\Lambda})}$ (short for Variable Match), so that (\ref{eq:count_lambda_mean}) can be rewritten as:
	\begin{eqnarray}\label{eq:counts_log_means}
		\log \left(\mathbb{E}\left[ y_{i,j,k,\ell} \right]\right)=VM(i,j,k,\ell)\cdot \mathbf{\Lambda}\qquad \text{ for }(i,j,k,\ell)\in Ind_{full},
	\end{eqnarray}
	which is in the form of a Poisson regression problem.

	\section{EM Algorithm for Estimation}
	In general, the EM algorithm functions as an iterative algorithm on statistical models involving a set of data with missing values, another set of latent data representing the truth, and a parameterized distribution function. In this situation, $\mathbf{X}:=\big(x_{\mathbf{v}}:\mathbf{v}=(i,j,k,\ell)\in Ind_{data} \big)$ is the incomplete observed data and $\mathbf{Y}:=\big(y_{\mathbf{v}}:\mathbf{v}=(i,j,k,\ell)\in Ind_{full}\big)$ is a complete set of counts. The distribution function of the latter set is characterized by independent Poisson distributions with log means $VM(\mathbf{v})\cdot \mathbf{\Lambda}$ (refer back to (\ref{eq:counts_log_means})). The full distribution of $\mathbf{Y}$ is:
	\begin{eqnarray}
		p(\mathbf{Y}|\mathbf{\Lambda})=\prod_{\mathbf{v}\in Ind_{full} } \frac{\exp\left[ y_{\mathbf{v}}VM(\mathbf{v})\cdot \mathbf{\Lambda}-\exp\left( VM(\mathbf{v})\cdot \mathbf{\Lambda} \right) \right]  }{y_{\mathbf{v}}!}.
	\end{eqnarray}
	which leads to the log-likelihood function:
	\begin{eqnarray}\label{eq:log-likelihood_full}
		\log p(\mathbf{Y}|\mathbf{\Lambda})=\sum_{\mathbf{v}\in Ind_{full} }\Big[y_{\mathbf{v}}VM(\mathbf{v})\cdot \mathbf{\Lambda}-\exp\left( VM(\mathbf{v})\cdot \mathbf{\Lambda} \right)\Big]+C(\mathbf{Y})
	\end{eqnarray}
	for some function $C()$. For the observed counts $\mathbf{X}$, details of their distribution functions are not explicitly stated in \cite{heijden2018overview} and \cite{van2022multiple}, beyond the assumption on ``missingness at random". 
	\newline
	\newline
	\indent For any $\mathbf{u}\in Ind_{data}$, each individual that contributes to the count $x_\mathbf{u}$ can have their true affiliation belong to any member of $CD(\mathbf{u})$; if all of the information about this entry was observed, then it would be seen as a part of the count $y_\mathbf{v}$ for some $\mathbf{v}\in CD(\mathbf{u})$. This makes $x_\mathbf{u}$ a sum of ``contributions'' from the values $\{y_\mathbf{v}:\mathbf{v}\in CD(\mathbf{u})\}$. With respect to how this relates to the calculation of the EM algorithm, the numerical results of \cite{heijden2018overview} and \cite{van2022multiple} appears to act under the assumption that conditional on the value of $x_{\mathbf{v}}$:
	\begin{eqnarray}\label{eq:expected_contribution}
		\mathbb{E}\left[ \text{contribution from }y_\mathbf{v}\text{ in }x_\mathbf{u}\big| \mathbf{X},\mathbf{Y} \right]=\frac{x_{\mathbf{u}}y_\mathbf{v}}{\sum\limits_{\mathbf{w}\in CD(\mathbf{u})}y_\mathbf{w}}.
	\end{eqnarray}
	The reasoning appears to be that with $y_\mathbf{v}$ as independent Poisson variables, this makes $x_\mathbf{u}$ Poisson and the contributions multinomial when conditioned on $x_\mathbf{u}$.
	\newline
	\newline
	\indent Using these foundations, the EM algorithm to estimate for the true values of $\mathbf{\Lambda}$ can be fully detailed. Suppose one starts with initial guess $\mathbf{\hat{\Lambda}}^{(0)}$ for the parameter $\mathbf{\Lambda}$, then for $t\in\mathbb{N}$ the algorithm proceeds by taking the previous estimates $\mathbf{\hat{\Lambda}}^{(t-1)}$ and repeatedly applying the following 2 steps:
	\begin{itemize}
		\item[E-step: ] First, calculate the temporary values
		\begin{eqnarray}
			\hat{y}_{\mathbf{v}}^{(t)*}&=&\mathbb{E}\left[ y_{\mathbf{v}} \Big| \mathbf{\Lambda}=\mathbf{\hat{\Lambda}}^{(t-1)} \right]\nonumber\\
			&=&\exp\left[ VM(\mathbf{v})\cdot \mathbf{\hat{\Lambda}}^{(t-1)} \right]\\
			&&\text{ for }\mathbf{v}\in Ind_{full}.
		\end{eqnarray}
		Next, proceed to assign the values of $\mathbb{E}\left[ y_{\mathbf{v}} \big| \mathbf{X},\mathbf{\hat{\Lambda}}^{(t-1)} \right]$'s to the $\hat{y}_{\mathbf{v}}^{(t)}$'s through the following steps:
		\begin{enumerate}
			\item For every $\mathbf{v}\in Ind_{full}$ such that there exists a $\mathbf{u}\in Ind_{data}$ where $\mathbf{v}\in CD(\mathbf{u})$, set $\hat{y}^{(t)}_{\mathbf{v}}\leftarrow 0$.
			\item For all other $\mathbf{v}\in Ind_{full}$ where $\mathbf{v}\notin \bigcup\limits_{\mathbf{u}\in Ind_{data}} CD(\mathbf{u})  $, set $\hat{y}^{(t)}_{\mathbf{v}}\leftarrow \hat{y}^{(t)*}_{\mathbf{v}}.$
			\item For every element $\mathbf{u}\in Ind_{data}$, set 
			\begin{equation}\label{eq:missing_update}
				\hat{y}_\mathbf{v}^{(t)}\leftarrow \hat{y}_\mathbf{v}^{(t)}+\frac{x_{\mathbf{u}}\hat{y}_\mathbf{v}^{(t)*}}{\sum\limits_{\mathbf{w}\in CD(\mathbf{u})}\hat{y}_\mathbf{w}^{(t)*}}
			\end{equation}
			for all $\mathbf{v}\in CD(\mathbf{u})$
		\end{enumerate}
		with expression (\ref{eq:missing_update}) motivated by the expression in (\ref{eq:expected_contribution}). Using (\ref{eq:log-likelihood_full}), the conditional expectation of the log-likelihood is:
		\begin{eqnarray}\label{eq:q_param_function}
			Q\left( \mathbf{\Lambda} \big| \mathbf{\hat{\Lambda}}^{(t-1)} \right)&=&\mathbb{E}_{\mathbf{Y}\sim p(\cdot| \mathbf{X},\mathbf{\hat{\Lambda}}^{(t-1)} ) }\Big[ \log p\left(\mathbf{X},\mathbf{Y}\Big| \mathbf{\Lambda}\right) \Big]\nonumber\\
			&=&\sum_{\mathbf{v}\in Ind_{full} }\Big[\hat{y}_{\mathbf{v}}^{(t)}VM(\mathbf{v})\cdot \mathbf{\Lambda}-\exp\left( VM(\mathbf{v})\cdot \mathbf{\Lambda} \right)\Big]+C^*(\mathbf{\hat{\Lambda}}^{(t-1)})
		\end{eqnarray}
		for some function $C^*()$.
		
		\item[M-step: ] Maximize the expression in (\ref{eq:q_param_function}) with respect to $\mathbf{\Lambda}$:
		\begin{eqnarray}
			\mathbf{\hat{\Lambda}}^{(t)}&=&\underset{\mathbf{\Lambda}}{\arg\max}\,\, Q\left( \mathbf{\Lambda} \big| \mathbf{\hat{\Lambda}}^{(t-1)} \right)\nonumber\\
			&=&\underset{\mathbf{\Lambda}}{\arg\max}\sum_{\mathbf{v}\in Ind_{full} }\Big[\hat{y}_{\mathbf{v}}^{(t)}VM(\mathbf{v})\cdot \mathbf{\Lambda}-\exp\left( VM(\mathbf{v})\cdot \mathbf{\Lambda} \right)\Big].
		\end{eqnarray} 
		This is identical to a Poisson regression problem involving the covariates $\big(VM(\mathbf{v})\big)_{\mathbf{v}\in Ind_{full}}$ and response variable $\big( \hat{y}_\mathbf{v}^{(t)}  \big)_{\mathbf{v}\in Ind_{full}}$.
	\end{itemize}
	~\newline
	\indent The EM algorithm calculates the MLE for the parameter of the likelihood function, which is $\mathbf{\Lambda}$ for this setting. However, users of the log-linear model on population data may be more interested in estimates of the contingency table counts, the $y_\mathbf{v}$'s. For this task it can be more useful to think of the EM algorithm operating in three steps: suppose one has a set of estimates $\{\hat{y}_\mathbf{v}^{(t-1)}  \}_{\mathbf{v}\in Ind_{full}}$, then the EM algorithm steps for procuring the next iteration of count estimates are:
	\begin{enumerate}
		\item Obtain an estimate $\mathbf{\hat{\Lambda}}^{(t-1)}$ from $\big( \hat{y}_\mathbf{v}^{(t-1)} \big)_{\mathbf{v}\in Ind_{full}} $ through:
		\begin{eqnarray}
			\mathbf{\hat{\Lambda}}^{(t-1)}
			&=&\underset{\mathbf{\Lambda}}{\arg\max}\sum_{\mathbf{v}\in Ind_{full} }\Big[\hat{y}_{\mathbf{v}}^{(t-1)}VM(\mathbf{v})\cdot \mathbf{\Lambda}-\exp\left( VM(\mathbf{v})\cdot \mathbf{\Lambda} \right)\Big].
		\end{eqnarray}
		This step can be described as applying a function $PR:\mathbb{R}^{|Ind_{full}|}\to \mathbb{R}^{\dim( \mathbf{\Lambda} )}$ (short for Poisson Regression) such that 
		\begin{equation}
			\mathbf{\hat{\Lambda}}^{(t-1)}=PR\left( \big( \hat{y}_\mathbf{v}^{(t-1)} \big)_{\mathbf{v}\in Ind_{full}} \right).
		\end{equation}
		\item Obtain a set of temporary counts $\big(\hat{y}^{(t)*}_\mathbf{v}\big)_{\mathbf{v}\in Ind_{full}}$ from $\mathbf{\hat{\Lambda}}^{(t-1)}$ through calculating:
		\begin{eqnarray}
			\hat{y}_{\mathbf{v}}^{(t)*}=\exp\left[ VM(\mathbf{v})\cdot \mathbf{\hat{\Lambda}}^{(t-1)} \right]\qquad\text{ for }\mathbf{v}\in Ind_{full}.
		\end{eqnarray}
		This step can be described as applying a function $UE:\mathbb{R}^{\dim( \mathbf{\Lambda} )}\to\mathbb{R}^{|Ind_{full}|}$ (short for Unconditioned Expectation) such that 
		\begin{equation}
			\big(\hat{y}_{\mathbf{v}}^{(t)*}\big)_{\mathbf{v}\in Ind_{full}}=UE\left( \mathbf{\hat{\Lambda}}^{(t-1)} \right).
		\end{equation}
		\item From the temporary values $\hat{y}^{(t)*}_\mathbf{v}$, obtain the next iteration of counts $\{\hat{y}_\mathbf{v}^{(t)}  \}_{\mathbf{v}\in Ind_{full}}$ through the steps:
		\begin{enumerate}
			\item Initialize $\hat{y}_{\mathbf{v}}^{(t)}=0$ for all $\mathbf{v}\in Ind_{full}$.
			\item For all $\mathbf{v}\in Ind_{full}$,
			\begin{eqnarray}\label{eqn:DIST_rewritten}
				\hat{y}_\mathbf{v}^{(t)}=\begin{cases} \hat{y}^{(t)*}_{\mathbf{v}} \qquad\qquad\qquad\qquad  \text{ if }\mathbf{v}\notin CD(\mathbf{u})  \text{ for every }\mathbf{u}\in Ind_{data}\\
					\sum
					\limits_{\mathbf{u}\in CD^-(\mathbf{v}) } \left[ \left(x_\mathbf{u} \hat{y}_{\mathbf{v}}^{(t)*} \right)  \left( \sum\limits_{\mathbf{w}\in CD(\mathbf{u})}\hat{y}_\mathbf{w}^{(t)*} \right)^{-1} \right]\qquad \text{otherwise}
				\end{cases}
			\end{eqnarray}
			where 
			\begin{equation}
				CD^-( \mathbf{v} ):=\left\{ \mathbf{u}\in Ind_{data}:\, \mathbf{v}\in CD(\mathbf{u}  )\,\,\text{and}\,\, x_{\mathbf{u}}>0 \right\}.
			\end{equation}
		\end{enumerate}
		The operation mapping $\big( \hat{y}_\mathbf{v}^{(t)*} \big)_{\mathbf{v}\in Ind_{full}}$ to $\big( \hat{y}_\mathbf{v}^{(t)} \big)_{\mathbf{v}\in Ind_{full}}$ can also be described as a function $DIST:\mathbb{R}^{|Ind_{full}|}\to \mathbb{R}^{|Ind_{full}|}$ (short for count DISTribution).
	\end{enumerate} 
	Written this way, the EM algorithm can be envisioned as a composite function $DIST\circ UE \circ PR$, which iterates the count estimates by:
	\begin{equation}
		\big(\hat{y}_\mathbf{v}^{(t)}\big)_{\mathbf{v}\in Ind_{full}}=  DIST\left(UE\left(PR\left(\big(\hat{y}_\mathbf{v}^{(t-1)}\big)_{\mathbf{v}\in Ind_{full}}  \right)\right)\right)
	\end{equation}


	~\newline
	\indent Repeated iterations of the EM algorithm in order to find a point of convergence would be the same steps as finding a fixed point for $DIST\circ UE \circ PR$ by fixed point iteration. The existence of an attracting fixed point\footnote{a fixed point $y_{fix}$ that has a neighbor $U$, where any point within $U$ used as a starting point for the fixed point iteration algorithm would cause convergence to $y_{fix}$} for the composite function would be synonymous with the EM algorithm being able to converge if selecting the right initial guess. The foxus from here on would be concentrated on fixed points, for if there exists a way to find a fixed point of $DIST\circ UE \circ PR $, then that point is a very good candidate for the converging point of the EM algorithm.
	\newline
	\newline
	\indent A computational issue presents itself when performing the EM algorithm, at the $PR$ step of the iterative process. At this step a Poisson regression is performed to estimate for $n_An_B+n_A+n_B$ parameters, and this optimization procedure is typically performed using gradient descent. Conventional methods of gradient descent, such as Newton's method, works by inverting the Hessian matrix, which in this case would be a square matrix of width $n_An_B+n_A+n_B$. This means that every step of gradient descent would require $O(n_A^3n_B^3)$ operations, meaning the total computational time of each $PR$ step is $O(S\cdot n_A^3n_B^3)$, where $S$ is the number of require gradient descent steps that can vary depending on the specific algorithm used and the current location of the $\hat{y}^{(t)}_\mathbf{v}$'s that the $PR$ function is applied to. This scaling, which is at least the order of the amount of data cubed, will cause the overwhelming majority of computational time in practical scenarios, and can cause the entire algorithm to require hours of runtime when $n_A$ and $n_B$ have been increased to as low as 50. Comparatively, in most data examples encountered by the author, the $UE$ and $DIST$ steps require only $O(n_An_B)$ computational steps at every EM iteration.
	\newline
	\newline
	\indent Of course, this analysis does not take into account of optimization procedures that do not depend on Hessian inversion, such as stochastic gradient descent (SGD) or adaptive moment estimation (ADAM). The order of computation might be different with these other procedures, but the problem remains that each $PR$ step involves optimization on $O(n_An_B)$ variables, and $PR$ would be repeated at each iteration of the EM algorithm. The previous analysis provides a motivation to cut the time cost specifically for the $PR$ step, for this would massively reduce the running time of the EM algorithm as a whole.
	
	\section{Alternative Fixed Point Method}\label{sec:alt_fix_point_method}
	Refer back to the New Zealand data given in Table \ref{tab:vdh_data_examp} that would give the EM estimates in Table \ref{tab:vdh_output}, there are some clear patterns that are noticeable from the final results in the latter table. These patterns can be more easily discerned by splitting the table into four quadrants, and the indices $Ind_{full}$ into four distinct subsets. For ease of notation, define the four sets
	\begin{equation}
		Ind_{i,j}:=\Big\{ (i,j,k,\ell):\,k\in \{1,\dots,n_A\},\ell\in \{1,\dots,n_B\} \Big\}\qquad \text{ for }(i,j)\in \{0,1\}^2.
	\end{equation}
	In the quadrant of $Ind_{1,0}$, it is clear that values alongside the rows are in the same proportion as in the quadrant of $Ind_{1,1}$, as $3,170,294.8/33,787.9\approx 38,616.0/411.6$ and $111,242.5/448,084.8\approx 877.6/3534.9$. Similarly, $3,170,294.8/111,242.5\approx 402,709.4/14,130.7$ and $33,787.9/448,084.8\approx 10,770.8/142,839.1$, shows the $Ind_{0,1}$ quadrant has values in the same proportion as $Ind_{1,1}$ quadrant, alongside its columns. Finally, the $Ind_{0,0}$ quadrant can be directly calculated as the product of values inside the $Ind_{0,1}$ and $Ind_{1,0}$ quadrants, divided by the values of the $Ind_{1,1}$ quadrant; e.g. $4905.2\approx 402709.4\cdot 38616.0/3170294.8$.
	\newline
	\newline
	\indent These patterns are not surprising, because the Poisson model in (\ref{eq:count_lambda_mean}) do imply these relations. This suggests that if there is way to determine the EM algorithm converging point values in the $Ind_{1,1}$ quadrant, the row totals of the $Ind_{1,0}$ quadrant, and the column totals of the $Ind_{0,1}$ quadrant, then it is possible to determine all $4n_An_B$ values for every coordinate in $Ind_{full}$ from these $n_An_B+n_A+n_B$ values.
	\newline
	\newline
	\indent There does seem to be a pattern that is helpful in determining the values for the $Ind_{1,1}$ quadrant. Observe that if there are counts $\mathbf{z}=\left(z_{\mathbf{v}}\right)_{\mathbf{v}\in Ind_{full}}$, then the application of the $DIST$ function upon those counts give the following algebraic relation:
	\begin{eqnarray}\label{eq:vdh_example_quandrant_expressions}
		\pi_{1,1,1,1}\left( DIST(\mathbf{z}) \right)&=& x_{1,1,1,1}+\frac{x_{1,1,1,-1}z_{1,1,1,1}}{z_{1,1,1,:}}+\frac{x_{1,1,-1,1}z_{1,1,1,1}}{z_{1,1,:,1}}+\frac{x_{1,1,-1,-1}z_{1,1,1,1}}{\sum\limits_{\mathbf{v}\in Ind_{1,1}}z_\mathbf{v} }\nonumber\\
		\pi_{1,1,1,2}\left( DIST(\mathbf{z}) \right)&=& x_{1,1,1,2}+\frac{x_{1,1,1,-1}z_{1,1,1,2}}{z_{1,1,1,:}}+\frac{x_{1,1,-1,2}z_{1,1,1,2}}{z_{1,1,:,2}}+\frac{x_{1,1,-1,-1}z_{1,1,1,2}}{\sum\limits_{\mathbf{v}\in Ind_{1,1}}z_\mathbf{v} }\nonumber\\
		\pi_{1,1,2,1}\left( DIST(\mathbf{z}) \right)&=& x_{1,1,2,1}+\frac{x_{1,1,2,-1}z_{1,1,2,1}}{z_{1,1,2,:}}+\frac{x_{1,1,-1,1}z_{1,1,2,1}}{z_{1,1,:,1}}+\frac{x_{1,1,-1,-1}z_{1,1,2,1}}{\sum\limits_{\mathbf{v}\in Ind_{1,1}}z_\mathbf{v} }\nonumber\\
		\pi_{1,1,2,2}\left( DIST(\mathbf{z}) \right)&=& x_{1,1,2,2}+\frac{x_{1,1,2,-1}z_{1,1,2,2}}{z_{1,1,2,:}}+\frac{x_{1,1,-1,2}z_{1,1,2,2}}{z_{1,1,:,2}}+\frac{x_{1,1,-1,-1}z_{1,1,2,2}}{\sum\limits_{\mathbf{v}\in Ind_{1,1}}z_\mathbf{v} }
	\end{eqnarray}
	where $\pi_{i,j,k,\ell}$ function extracts the value at coordinate $(i,j,k,\ell)$ (i.e., $\pi_\mathbf{v}(\mathbf{z}))=z_\mathbf{v}$ for every $\mathbf{v}\in Ind_{full}$), $z_{i,j,:,\ell}:=\sum\limits_{c=1}^{n_A}z_{i,j,c,\ell}$ and $z_{i,j,k,:}:=\sum\limits_{c=1}^{n_B}z_{i,j,k,c}$ are row and column totals, and the $x_{1,1,k,\ell}$ values are the upper right nine entries in the upper left of Table \ref{tab:vdh_data_examp}. Imagining a function that maps $( z_{1,1,1,1}, z_{1,1,1,2},z_{1,1,2,1},z_{1,1,2,2})$ to the four expressions in (\ref{eq:vdh_example_quandrant_expressions}) as $DIST_{match}:\mathbb{R}^4\to\mathbb{R}^4$, it can be observed that the $Ind_{1,1}$ entries of Table \ref{tab:vdh_output} are nearly the fixed points of $DIST_{match}$:
	\begin{eqnarray}
		\big(3170294.8,\,33787.9,\,111242.5,\,448084.8\big)\approx DIST_{match}\big(3170294.8,\,33787.9,\,111242.5,\,448084.8\big).
	\end{eqnarray}
	In this case, it seems that the most important coordinates of the MLE are also the fixed points of the simpler arithmetic operations in (\ref{eq:vdh_example_quandrant_expressions}). This suggests an alternative way to determine the $Ind_{1,1}$ values of the MLE: find a solution to $\mathbf{z}=DIST_{match}(\mathbf{z})$. This might be difficult analytically, as (\ref{eq:vdh_example_quandrant_expressions}) shows this would involve solving a system of four polynomials, each of order 4, but it may be very tractable computationally using fixed point iteration, i.e., starting with an initial guess point and applying the $DIST_{match}$ function repeatedly until convergence. This alternative procedure would run as follows:
	\begin{enumerate}
		\item From an initial guess of counts for coordinates in $Ind_{1,1}$, apply a function equaling the effect of $DIST$ on those coordinates until convergence (such a function was labeled $DIST_{match}$ in the previous paragraph).
		\item Apply a similar iterative process to find column totals of quadrant $Ind_{0,1}$ and row totals of quadrant $Ind_{1,0}$ that are fixed by $DIST$.
		\item Fill out the entries in $Ind_{0,1}$ by making them proportionate to the columns of quadrant $Ind_{1,1}$, and the entries in $Ind_{1,0}$ proportionate to the rows of $Ind_{1,1}$ and making sure the row/column totals equal to what was found in the previous step.
		\item Fill out the entries in $Ind_{0,0}$ as a product of the $Ind_{1,0}$ and $Ind_{0,1}$ quadrants divided by the $Ind_{1,1}$ quadrant.
	\end{enumerate}
	The question becomes whether this procedure can generally give the same estimate of the MLE as the EM algorithm. Under what conditions can this alternative method work?

	\section{Assumptions}
	A general statement can in fact be made about the aforementioned algorithm: under certain conditions, if the algorithm has a fixed point then that point is also a fixed point of the EM algorithm. Before stating this result, the necessary conditions and associated operations in the algorithm need to be described.
	\newline
	\newline
	\indent The result will need the following list of assumptions, which will be called Structural Assumptions since they pertain to the structure of $Ind_{data}$ and how it relates to the $Ind_{full}$ set through the $CD$ function.
	\begin{enumerate}[label=S\arabic*]
		\item \label{asn:maxmodel} The underlying loglinear model is maximal, with the log means following expression (\ref{eq:count_lambda_mean}) utilizing all of the variables in (\ref{eq:free_lambdas}).
		\item \label{asn:distdivide} For any element in $\mathbf{v}=(i,j,k,\ell)\in Ind_{data}$, any element $\mathbf{u}\in CD(i,j,k,\ell)$ must be of the form $\mathbf{u}=(i,j,k',\ell')$ for some $1\leq k'\leq n_A,\,1\leq\ell'\leq n_B$. In other words the first two elements of $\mathbf{v}$ must be equal to the first two entries of every element in $CD(\mathbf{v})$ for every $\mathbf{v}\in Ind_{data}$.
		\item \label{asn:no_missing_totals} No coordinates of $Ind_{0,0}$ belongs in any subset mapped to by the $ CD $ function; for every $1\leq k\leq n_A$ and $1\leq \ell\leq n_B$, $(0,0,k,\ell)\notin CD(\mathbf{v})$ for every $\mathbf{v}\in Ind_{data}$.
		\item \label{asn:nonmatchdist} Suppose $(0,1,k,\ell)\in Ind_{data}$ and $(0,1,k',\ell')\in CD(0,1,k,l)$ for some $1\leq k'\leq n_A,\,1\leq\ell'\leq n_B$, then $(0,1,c,\ell')\in CD(0,1,k,\ell)$ for every $1\leq c \leq n_A$. Similarly, if $(1,0,k,\ell)\in Ind_{data}$ and $(1,0,k',\ell')\in CD(1,0,k,l)$, then $(1,0,k',c)\in CD(1,0,k,l)$ for all $1\leq c\leq n_B$.
	\end{enumerate}
	~\indent The models presented in \cite{van2022multiple} allowed flexibility in using different loglinear models in order to fit the data, so it is worth stating again in Assumption \ref{asn:maxmodel} that the following results only holds for the maximal model. The other assumptions have stronger real world foundations.
	\newline
	\newline
	\indent Assumption \ref{asn:distdivide} is practically always true, since entries counted within register $A$ (or $B$) cannot contribute to counts of entries outside of register $A$ (or $B$). For instance, counts of the form $x_{1,1,k,\ell}$ consist of entries found from both $A$ and $B$, and all of these entries must be distributed towards counts of the form $y_{1,1,k',\ell'}$. For any of these entries to be distributed to a $y_{1,0,k',\ell'}$ would mean those entries are actually not in $B$, which is nonsensical. Similar logic holds for counts of the form $x_{1,0,k,\ell}$ and $x_{0,1,k,\ell}$. Assumption \ref{asn:no_missing_totals} can be reasoned in a similar way, and also as a natural consequence of Assumption \ref{asn:distdivide}. Entries missing from both registers $A$ and $B$ will not be recorded in the data, and there will be no counts of the form $x_{0,0,k,\ell}$, and hence no values will be distributed to any estimates of the form $y_{0,0,k',\ell'}$.
	\newline
	\newline
	\indent Assumption \ref{asn:nonmatchdist} also has strong real world grounding. Entries that contribute to a count $x_{1,0,k,\ell}$ are those missing from registry $B$, and therefore there are no information on which category $1,\dots,n_B$ that they belong to. If such entries have a possibility of residing within $y_{1,0,k',\ell'}$ for some $k'$ and $\ell'$, but not be possible for them to reside within $y_{1,0,k',\ell''}$ for some $\ell''\neq \ell'$, then it would be contradictory as that would mean there is information on which categories within $B$ these entries are members of. 
	\newline
	\newline
	\indent Working under Assumptions \ref{asn:nonmatchdist} and \ref{asn:distdivide}, the $CD$ function has a particular effect on the the row/column totals of the $Ind_{1,0}$ and $Ind_{0,1}$ quadrants. To denote this, define the functions $CD_{B=0}:Ind_{1,0}\to 2^{\{1,\dots,n_A\}}$ and $CD_{A=0}:Ind_{0,1}\to 2^{\{1,\dots,n_B\}}$ as
	\begin{eqnarray}\label{eq:CD_marginal_def}
		CD_{B=0}(\mathbf{v})&:=&\bigg\{ k\in \{1,\dots,n_A\} : (1,0,k,1)\in CD(\mathbf{v}) \bigg\}\nonumber\\
		CD_{A=0}(\mathbf{v})&:=&\bigg\{ \ell \in \{1,\dots,n_B\} : (0,1,1,\ell)\in CD(\mathbf{v}) \bigg\},
	\end{eqnarray}
	and associated pseudo-inverse functions $CD_{B=0}^-:\{1,\dots,n_A\}\to 2^{Ind_{1,0}}$ and $CD_{A=0}^-:\{1,\dots,n_B\}\to 2^{Ind_{0,1}}$ as
	\begin{eqnarray}
		CD_{B=0}^-(k)&:=& \big\{  \mathbf{u}\in Ind_{1,0}: k\in CD_{B=0}(\mathbf{u})\,\,\text{and}\,\, x_{\mathbf{u}}>0 \big\}\nonumber\\
		CD_{A=0}^-(\ell)&:=& \big\{  \mathbf{u}\in Ind_{0,1}: \ell\in CD_{A=0}(\mathbf{u})\,\,\text{and}\,\, x_{\mathbf{u}}>0 \big\}
	\end{eqnarray}
	for all $k\in \{1,\dots,n_A\}$ and $\ell\in \{1,\dots,n_B\}$. 
	\newline
	\newline
	\indent Now suppose there is a set of non-negative real values $\mathbf{z}=\big( z_{\mathbf{v}} \big)_{\mathbf{v}\in Ind_{full}}$, where the application of the $DIST$ function yields $\mathbf{z}'$ that equals
	\begin{equation}
		\mathbf{z}'=\left( z_\mathbf{v}' \right)_{\mathbf{v}\in Ind_{full}}=DIST\left( \mathbf{z}  \right),
	\end{equation}
	then the marginal totals of the $\mathbf{z}'$ satisfy
	\begin{eqnarray}\label{eq:marginal_totals}
		z_{1,0,k,:}' &=& \begin{cases}z_{1,0,k,:}\qquad\qquad\qquad\qquad\qquad\qquad\text{ if } k\notin CD_{B=0}(u)\text{ for all }\mathbf{u}\in Ind_{data}\\
			\sum\limits_{\mathbf{u}\in CD_{B=0}^-(k) } \left[  \left(x_{\mathbf{u}} z_{1,0,k,:}\right) \left(    \sum\limits_{c\in CD_{B=0}(\mathbf{u}) }z_{1,0,c,:}   \right)^{-1} \right]\qquad\text{ otherwise}
		\end{cases}\nonumber\\
		z_{0,1,:,\ell}' &=& \begin{cases}z_{0,1,:,\ell}\qquad\qquad\qquad\qquad\qquad\qquad\text{ if } \ell\notin CD_{A=0}(\mathbf{u})\text{ for all }\mathbf{u}\in Ind_{data}\\
			\sum\limits_{\mathbf{u}\in CD_{A=0}^-(\ell) } \left[  \left(x_{\mathbf{u}} z_{0,1,:,\ell}\right) \left(    \sum\limits_{c\in CD_{A=0}(\mathbf{u}) }z_{0,1,:,c}   \right)^{-1} \right]\qquad\text{ otherwise},
		\end{cases}
	\end{eqnarray}
	for $1\leq k\leq n_A$ and $1\leq \ell \leq n_B$. For details of why this is true, see Lemma \ref{lem:DIST_marginals}. These expressions are very reminiscent of the $DIST$ function as expressed in (\ref{eqn:DIST_rewritten}), and therefore motivates the definition of functions $DIST_{B=0}$ and $DIST_{A=0}$, the former mapping $ \mathbb{R}^{n_A} $ to $\mathbb{R}^{n_A}$ and the latter mapping $ \mathbb{R}^{n_B} $ to itself, where 
	\begin{eqnarray}
		DIST_{B=0}\left( z_{1,0,1,:},\dots, z_{1,0,n_A,:} \right)&=&\left( z_{1,0,1,:}',\dots,z_{1,0,n_A,:}' \right)\nonumber\\
		DIST_{A=0}\left(  z_{0,1,:,1},\dots, z_{0,1,:,n_B}  \right)&=&\left( z_{0,1,:,1}',\dots,z_{0,1,:,n_B}' \right),
	\end{eqnarray}
	as the $z'_\mathbf{v}$s are defined in (\ref{eq:marginal_totals}).
	~\newline
	~\newline 
	\indent Another implication of Assumption \ref{asn:distdivide} is that it effectively maps indices within $Ind_{1,1}$ to itself. To be specific, for entries in $Ind_{1,1}=\{ (1,1,k,\ell):(k,\ell)\in \{1,\dots,n_A\}\times \{1,\dots n_B\} \}$, which are of the form $(1,1,k,\ell)$, the only entries $\mathbf{u}$ within $Ind_{data}$ for which $(1,1,k,\ell)\in CD(\mathbf{u})$, must have the first two entries of $\mathbf{u}$ be $(1,1)$, and hence every element in $CD(\mathbf{u})$ must also have the first two entries equaling $(1,1)$. Therefore,
	\begin{eqnarray}
		z_{1,1,k,\ell}'&=&\sum_{\substack{\mathbf{u}:\, (1,1,k,\ell)\in CD(\mathbf{u})\\ \mathbf{u}\in Ind_{data}} } \left[ \left(x_\mathbf{u} z_{1,1,k,\ell} \right)  \left( \sum_{\mathbf{w}\in CD(\mathbf{u})}z_\mathbf{w} \right)^{-1} \right]\nonumber\\
		&&\sum_{\substack{\mathbf{u}:\, (1,1,k,\ell)\in CD(\mathbf{u})\\ \mathbf{u}\in Ind_{data}} } \left[ \left(x_\mathbf{u} z_{1,1,k,\ell} \right)  \left( \sum_{(1,1,c,b)\in CD(\mathbf{u})}z_{1,1,c,b} \right)^{-1}\right],
	\end{eqnarray}
	in other words, every $z_{1,1,k,\ell}'$ for $(1,1,k,\ell) \in Ind_{1,1}$ depend only on $\left(z_\mathbf{w}\right)_{\mathbf{w}\in Ind_{1,1} }$, and it is possible to define a function $DIST_{match}: Ind_{1,1} \to Ind_{1,1}$ where the output of $DIST_{match}$ is identical to the entries of $DIST$ on the coordinates corresponding to $Ind_{1,1}$:
	\begin{eqnarray}
		DIST_{match}\left( \left(z_\mathbf{v}\right)_{\mathbf{v}\in Ind_{1,1}} \right)=\left( DIST\left( \left( z_\mathbf{v} \right)_{\mathbf{v}\in Ind_{full}} \right) \right)_{Ind_{1,1}}
	\end{eqnarray}

	\begin{theorem}\label{thm:main_result}
		Suppose Assumptions $\ref{asn:maxmodel}$, $\ref{asn:distdivide}$, \ref{asn:no_missing_totals}, and $\ref{asn:nonmatchdist}$, are true. Next suppose there are 
		\begin{itemize}
			\item positive values $\left(y^{(s)}_\mathbf{v} \right)_{\mathbf{v}\in Ind_{1,1}}$ that are fixed points of $DIST_{match}$:
			\begin{equation}
				DIST_{match}\left( \left(y^{(s)}_\mathbf{v}  \right)_{\mathbf{v}\in Ind_{1,1}} \right) =\left(y^{(s)}_\mathbf{v}\right)_{\mathbf{v}\in Ind_{1,1}}
			\end{equation}
			
			\item positive values $\left( y^{(s)}_{1,0,1,:},\dots,y^{(s)}_{1,0,n_A,:} \right)$ that are fixed points of $DIST_{B=0}$:
			\begin{equation}
				DIST_{B=0}\left( y^{(s)}_{1,0,1,:},\dots,y^{(s)}_{1,0,n_A,:} \right)=\left( y^{(s)}_{1,0,1,:},\dots,y^{(s)}_{1,0,n_A,:} \right)
			\end{equation}
			\item positive values $\left( y^{(s)}_{0,1,:,1},\dots,y^{(s)}_{0,1,:,n_B} \right)$ that are fixed points of $DIST_{A-0}$.
		\end{itemize}
		Then it is possible to construct a set of values $\mathbf{y}^{(s)}=\left(y^{(s)}_\mathbf{v}\right)_{\mathbf{v}\in Ind_{full}}$ which is a fixed point of both $DIST$ and $UE\circ PR$, and hence is also a fixed point of the EM algorithm:
		\begin{itemize}
			\item for $\mathbf{v}\in Ind_{1,1}$, let $\pi_\mathbf{v}(\mathbf{y}^{(s)})= y^{(s)}_\mathbf{v}$
			\item for entries indexed by coordinates within $Ind_{1,0}$, let 
			\begin{equation}\label{eq:unmatch_B0_fixed_coords}
				y_{1,0,k,\ell}^{(s)}:=\pi_{1,0,k,\ell}\left( \mathbf{y}^{(s)} \right)=y_{1,0,k,:}^{(s)}\left(\frac{ y_{1,1,k,\ell}^{(s)} }{ \sum_{c=1}^{n_B} y_{1,1,k,c}^{(s)} }\right)
			\end{equation}
			for $(k,\ell)\in \{1,\dots,n_A\}\times\{1,\dots,n_B\}$
			\item for entries indexed by coordinates within $Ind_{0,1}$, let 
			\begin{eqnarray}\label{eq:unmatch_A0_fixed_coords}
				y_{0,1,k,\ell}^{(s)}:=\pi_{0,1,k,\ell}\left( \mathbf{y}^{(s)} \right)=y_{0,1,:,\ell}^{(s)}\left(\frac{ y_{1,1,k,\ell}^{(s)} }{ \sum_{c=1}^{n_A} y_{1,1,c,\ell}^{(s)} }\right)
			\end{eqnarray}
			for $(k,\ell)\in \{1,\dots,n_A\}\times\{1,\dots,n_B\}$
			\item after calculating the above, let 
			\begin{eqnarray}\label{eq:unmatch_fixed_coords}
				y_{0,0,k,\ell}^{(s)}:=\frac{ y_{1,0,k,\ell}^{(s)}\cdot y_{0,1,k,\ell}^{(s)} }{y_{1,1,k,\ell}^{(s)}}
			\end{eqnarray}
			for $(k,\ell)\in \{1,\dots,n_A\}\times\{1,\dots,n_B\}$
		\end{itemize}
	\end{theorem}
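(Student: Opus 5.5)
We need to show two things: that $\mathbf{y}^{(s)}$ is fixed by $UE\circ PR$, and that it is fixed by $DIST$. Together these give $DIST\circ UE\circ PR(\mathbf{y}^{(s)})=DIST(\mathbf{y}^{(s)})=\mathbf{y}^{(s)}$. The plan treats the two parts separately: first a structural statement about the maximal Poisson model, then a quadrant-by-quadrant check of $DIST$ using Lemma \ref{lem:DIST_marginals}.

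\emph{Step 1: $UE\circ PR$ fixes $\mathbf{y}^{(s)}$.}
Under Assumption \ref{asn:maxmodel} the model has $n_An_B+n_A+n_B$ free parameters. We show that $\mathbf{y}^{(s)}$ lies exactly on the model, i.e. $\log y^{(s)}_{\mathbf{v}}=VM(\mathbf{v})\cdot\mathbf{\Lambda}^*$ for some $\mathbf{\Lambda}^*$, by solving for $\mathbf{\Lambda}^*$ quadrant by quadrant. All entries are positive, since the fixed points are assumed positive and the construction preserves positivity, so the logarithms are defined.
\begin{itemize}
\item On $Ind_{1,1}$, choose $\lambda_0+\lambda^A_1+\lambda^B_1$, the $\lambda^a_k$, the $\lambda^b_\ell$, the $\lambda^{ab}_{k\ell}$ and the interaction offsets so that $\log y^{(s)}_{1,1,k,\ell}$ is matched exactly. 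The saturated $n_A\times n_B$ table needs $n_An_B$ parameters.
\item On $Ind_{1,0}$, (\ref{eq:unmatch_B0_fixed_coords}) gives $\log y^{(s)}_{1,0,k,\ell}-\log y^{(s)}_{1,1,k,\ell}=$ a function of $k$ only. This is exactly what $-\lambda^B_1-\lambda^{Ba}_{1,k}$ can absorb, with $n_A$ parameters.
\item On $Ind_{0,1}$, symmetrically, the difference is a function of $\ell$ only, absorbed by $\lambda^A_1,\lambda^{Ab}_{1,\ell}$.
\item On $Ind_{0,0}$, (\ref{eq:unmatch_fixed_coords}) gives $\log y_{0,0}=\log y_{1,0}+\log y_{0,1}-\log y_{1,1}$. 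This is exactly the additive structure forced by (\ref{eq:count_lambda_mean}), since the model has no $AB$ term.
\end{itemize}
This step needs some care with the corner-point parametrization ($\lambda$'s with index $1$ set to zero), but it is a linear triangular solve.

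Once $\mathbf{y}^{(s)}=UE(\mathbf{\Lambda}^*)$, the Poisson likelihood equations $\sum_{\mathbf{v}}(y_{\mathbf{v}}-e^{VM(\mathbf{v})\cdot\mathbf{\Lambda}})VM(\mathbf{v})=0$ hold at $\mathbf{\Lambda}^*$. The objective is strictly concave: the design matrix $\big(VM(\mathbf{v})\big)_{\mathbf{v}}$ has full column rank, because the identification above was unique. Hence $PR(\mathbf{y}^{(s)})=\mathbf{\Lambda}^*$ and $UE(PR(\mathbf{y}^{(s)}))=\mathbf{y}^{(s)}$.

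\emph{Step 2: $DIST$ fixes $\mathbf{y}^{(s)}$.} This is checked quadrant by quadrant.
\begin{itemize}
\item \emph{$Ind_{1,1}$:} by Assumption \ref{asn:distdivide} the $DIST$ output there equals $DIST_{match}$ applied to the $Ind_{1,1}$ entries, which fixes them by hypothesis.
\item \emph{$Ind_{0,0}$:} by Assumption \ref{asn:no_missing_totals} these coordinates lie in no $CD(\mathbf{u})$, so $DIST$ copies them unchanged.
\item \emph{$Ind_{1,0}$ (and symmetrically $Ind_{0,1}$):} this is the main obstacle. Take $\mathbf{v}=(1,0,k,\ell)$ covered by some $CD(\mathbf{u})$. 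By Assumptions \ref{asn:distdivide} and \ref{asn:nonmatchdist}, each relevant $CD(\mathbf{u})$ is a union of full rows $\{(1,0,c,b):c\in CD_{B=0}(\mathbf{u}),\,1\le b\le n_B\}$. Therefore the denominator in (\ref{eqn:DIST_rewritten}) equals $\sum_{c\in CD_{B=0}(\mathbf{u})}y^{(s)}_{1,0,c,:}$, and the set of contributing $\mathbf{u}$ is $CD^-_{B=0}(k)$. This gives
\begin{equation*}
\pi_{\mathbf{v}}\big(DIST(\mathbf{y}^{(s)})\big)=y^{(s)}_{1,0,k,\ell}\sum_{\mathbf{u}\in CD^-_{B=0}(k)}\frac{x_{\mathbf{u}}}{\sum_{c\in CD_{B=0}(\mathbf{u})}y^{(s)}_{1,0,c,:}}.
\end{equation*}
Summing over $\ell$, the row totals of $\mathbf{y}^{(s)}$ are $y^{(s)}_{1,0,k,:}$, because (\ref{eq:unmatch_B0_fixed_coords}) has row sums equal to these by construction. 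The $DIST_{B=0}$ fixed-point hypothesis, via (\ref{eq:marginal_totals}) and Lemma \ref{lem:DIST_marginals}, then says that the multiplier on the right equals $1$ whenever $CD^-_{B=0}(k)\ne\emptyset$. Hence every entry is fixed.
\item \emph{Uncovered rows and degenerate cases:} rows not covered by any $CD(\mathbf{u})$ are copied unchanged. Rows covered only by $\mathbf{u}$ with $x_{\mathbf{u}}=0$ would make the fixed total $0$, which contradicts positivity, so they cannot arise. The delicate point is precisely this bookkeeping: showing the multiplier depends on $k$ only, so that the within-row proportions are preserved.
\end{itemize}
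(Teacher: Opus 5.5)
Your proposal is correct and takes essentially the same route as the paper. It checks quadrant by quadrant that $DIST$ fixes $\mathbf{y}^{(s)}$, with the $Ind_{1,0}$ and $Ind_{0,1}$ cases reducing, as in Lemmas \ref{lem:DIST_marginals} and \ref{lem:DIST_marginal_fix}, to a per-row (per-column) multiplier that the $DIST_{B=0}$ and $DIST_{A=0}$ fixed-point hypotheses force to equal $1$; it then does an exact triangular solve $\mathbf{V}\mathbf{\Lambda}^*=\log\mathbf{y}^{(s)}$ and uses strict concavity of the full-rank Poisson likelihood as in Lemma \ref{lem:poisson_solution}. The only differences are that you swap the order of the two parts and make explicit some points the paper glosses over: the constructed rows and columns carry the prescribed marginal totals, covered rows with $CD^-_{B=0}(k)=\emptyset$ are ruled out by positivity, and you give the correct sign in the $Ind_{0,0}$ log identity, where the paper has a typo.
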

	\begin{proof}
		See section \ref{sec:main_proof}.
	\end{proof}
	This result gives justification for the proposed algorithm given at the end of Section \ref{sec:alt_fix_point_method} under a range of conditions. The requirements for this result to hold are the assumptions \ref{asn:maxmodel} to \ref{asn:nonmatchdist}, which have real world grounding and should generally hold with actual data sets, and the more restrictive condition of the existence of fixed points of $DIST_{match}$, $DIST_{A=0}$, and $DIST_{B=0}$ that have only positive values as entries. This more restrictive assumption becomes a major issue if not held, for if all fixed points of the three functions contain a zero entry, then the fixed points constructed by following the instructions of the previous result will not be a fixed point of $UE\circ PR$, since $UE$ can only output all positive entries. In this scenario, the constructed point will still be a fixed point of $DIST$, but may not be a fixed point of the EM algorithm $DIST\circ UE\circ PR$.
	\newline
	\newline
	\indent A first response to this problem will be to find out under which conditions do the fixed points of $DIST_{match}$, $DIST_{A=0}$, and $DIST_{B=0}$ contain only positive entries. The exact conditions necessary are not currently known, but the following Positive Assumptions are sufficient to guarantee that the fixed points of $DIST_{match}$, $DIST_{A=0}$, and $DIST_{B=0}$ have only positive entries:
	\begin{enumerate}[label=P\arabic*]
		\item \label{asn:p_match} For every $\mathbf{v}\in Ind_{1,1}$, there exists a $\mathbf{u}\in Ind_{data}$ such that $CD(\mathbf{u})=\{ \mathbf{v} \}$ and $x_\mathbf{u}>0$.
		\item \label{asn:p_B0}For every $1\leq k\leq n_A$, there exists a $\mathbf{u}\in Ind_{data}$ where $CD_{B=0}(\mathbf{u})=\{ (1,0,k,1),\dots,(1,0,k,n_B) \}$ and $x_{\mathbf{u}}>0$.
		\item \label{asn:p_A0}For every $1\leq \ell \leq n_B$, there exists a $\mathbf{u}\in Ind_{data}$ where $CD_{A=0}(\mathbf{u})=\{ (0,1,1,\ell),\dots,(1,0,n_A,\ell) \}$ and $x_{\mathbf{u}}>0$.
	\end{enumerate}
	A more grounded example of these conditions can be seen in the real world data example presented in Table \ref{tab:CD_func_examples}. There, any $(1,1,k,\ell)\in Ind_{1,1}$ has $CD(1,1,k,\ell)=\{(1,1,k,\ell)\}$ for \ref{asn:p_match}, any $1\leq k\leq n_A$ has the coordinate $\mathbf{u}=(1,0,k,-1)$ for \ref{asn:p_B0}, and any $1\leq \ell\leq n_B$ has coordinate $\mathbf{u}=(0,1,-1,\ell)$ for \ref{asn:p_A0}. As seen in Table \ref{tab:vdh_data_examp}, all of these $x_\mathbf{u}$ values are positive, and hence the fixed points, seen in Table \ref{tab:vdh_output}m contain all positive values.
	\newline
	\newline
	\indent Unfortunately, \ref{asn:p_match} to \ref{asn:p_A0} cannot be argued to always hold in a practical setting in the same way that was done for conditions \ref{asn:distdivide} to \ref{asn:nonmatchdist}. An example scenario in which the former conditions do not hold is if the total true number of people counted in $A$ and/or $B$ are small and the number of categories $n_A$ and/or $n_B$ are large. If a thousand people are counted in data source $A$, and is divided among 5 ethnicities, 5 age groups, 4 economic categories, and 2 sexes, then the counts in $A$ would be spread among $n_A=5\times 5\times 4\times 2=200$ categories, and there might be categories, say ethnicity 2, extremely young category 1, extremely rich 4, sex 1, that have a total of 0 counts among $A$, and hence $x_{1,1,k,\ell}=0$ for all $\ell$ and $k$ corresponding to this specific set of categories. In summary, dividing the counts in $A$ or $B$ into too many or too unbalanced categories can cause some counts to be 0, and no longer guarantee that this alternative algorithm for the log-linear MLE can work mathematically.
	\newline
	\newline
	\indent Practically, it is not truly necessary for the algorithm to get the exact same result as the EM algorithm, it is sufficient if it yields an answer that is close enough. It may be the case that for fixed sets $Ind_{data}$ and $Ind_{full}$, and fixed function $CD$ to link the two sets, then there is a continuous function that maps $(x_\mathbf{u})_{ \mathbf{u}\in Ind_{data} }$ to a fixed point of the EM algorithm, and another continuous function that maps $(x_\mathbf{u})_{ \mathbf{u}\in Ind_{data} }$ to the point described in Theorem \ref{thm:main_result}. If this is true, then either of the following can yield a point that is very close to the fixed point of the EM algorithm:
	\begin{itemize}
		\item Use the construction proposed in Theorem \ref{thm:main_result} even if it contains entries of 0.
		\item Use the construction proposed in Theorem \ref{thm:main_result} on the data $(x_\mathbf{u}')_{ \mathbf{u}\in Ind_{data} }$, where $x_{\mathbf{u}}'=\max\{ \delta,x_{\mathbf{u}} \}$ for some small positive value $\delta$.
	\end{itemize}
	In order to assess the accuracy of such methods, some numerical studies will need to be done.

		\section{Proof of Theorem \ref{thm:main_result}}\label{sec:main_proof}
		\begin{proof}
			The first part of the proof will show that $\mathbf{y}^{(s)}$ is a fixed point of $DIST$, and the second part will show it is a fixed point of $UE\circ PR$.
			\newline
			\newline
			\textbf{Part 1}  Verifying that $\mathbf{y}^{(s)}$ is a fixed point of the $DIST$ step of the EM algorithm can be done by checking on the different types of components. For every $\mathbf{v}\in Ind_{full}$, the associated value at that coordinate, after the application of $DIST$ on the value of $y_\mathbf{v}^{(s)}$ is 
		\begin{equation}
			\pi_\mathbf{v}\left( DIST\left(\mathbf{y}^{(s)}\right) \right),
		\end{equation}
		which in turn equals:
		\begin{itemize}
			\item if $\mathbf{v}\in Ind_{1,1}$, then since $(y_{\mathbf{w}}^{(s)})_{\mathbf{w}\in Ind_{1,1}}$ is a fixed point of $DIST_{match}$:
			\begin{eqnarray}
				\pi_\mathbf{v}\left( DIST\left(\mathbf{y}^{(s)}\right) \right)&=&\pi_\mathbf{v}\left( DIST_{match}\left( \left( y^{(s)}_\mathbf{u} \right)_{\mathbf{u}\in Ind_{1,1}} \right) \right)\nonumber\\
				&=&\pi_\mathbf{v}\left( \left( y^{(s)}_\mathbf{u} \right)_{\mathbf{u}\in Ind_{1,1}} \right)\nonumber\\
				&=& y^{(s)}_\mathbf{v}
			\end{eqnarray}
			\item if $\mathbf{v}\in Ind_{0,0}$, then by Assumption \ref{asn:no_missing_totals}, there is no $\mathbf{u}\in Ind_{data}$ for which $\mathbf{v}\in CD(\mathbf{u})$, and hence $DIST$ fixes this coordinate by defintion:
			\begin{equation}
				\pi_\mathbf{v}\left( DIST\left(\mathbf{y}^{(s)}\right) \right)=y^{(s)}_\mathbf{v}
			\end{equation}
			\item if $\mathbf{v}\in Ind_{1,0}\cup Ind_{0,1}$, then since $\mathbf{y}^{(s)}$ has only strictly positive entries and satisfies \ref{asn:operational}, and the marginal totals $y^{(s)}_{1,0,k,:}$'s are fixed by $DIST_{B=0}$ and $y^{(s)}_{0,1,:,\ell}$'s are fixed by $DIST_{A=0}$, Lemma \ref{lem:DIST_marginal_fix} means that every element indexed by $Ind_{1,0}$ and $Ind_{0,1}$ are fixed by $DIST$:
			\begin{equation}
				\pi_\mathbf{v}\left( DIST\left(\mathbf{y}^{(s)}\right) \right)=y^{(s)}_\mathbf{v}.
			\end{equation}
		\end{itemize}
		Overall, this shows that
		\begin{eqnarray}
			DIST\left( \mathbf{y}^{(s)} \right)=\mathbf{y}^{(s)}
		\end{eqnarray}
		\newline
		\newline
		\textbf{Part 2} When performing the Poisson regression in the $PR$ step of the EM algorithm, the goal would be to find the parameters
		\begin{eqnarray} 
			\boldsymbol{\Lambda}&=&\Big[ \lambda_0,\lambda_1^A,\lambda_1^B, \lambda^a_{2},\dots,\lambda^a_{n_A},\lambda^b_2,\dots,\lambda^b_{n_B},\nonumber\\
			&&\lambda^{Ab}_{1,2},\dots,\lambda^{Ab}_{1,n_B},\lambda^{Ba}_{1,2},\dots,\lambda^{Ba}_{1,n_A},
			\lambda^{ab}_{2,2},\dots,\lambda^{ab}_{n_A,n_B} \Big]^T
		\end{eqnarray}
		for the Poisson regression where $\mathbf{y}^{(s)}$ is the response variable and the covariate variable can be constructed using the coordinates of $Ind_{full}$. In this case, it is possible to find a deterministic solution to the Poisson regression, because there is a unique solution to the following set of equations:
		\begin{eqnarray}\label{eq:full_logmean_equations}
			\log y_{1,1,1,1}^{(s)}&=& \lambda_0+\lambda_1^A+\lambda_1^B\nonumber\\
			\log y_{1,1,k,1}^{(s)} &=& \lambda_0+\lambda_1^A+\lambda_1^B+\lambda^a_k+\lambda^{Ba}_{1,k}\nonumber\\
			\log y_{1,1,1,\ell}^{(s)} &=& \lambda_0+\lambda_1^A+\lambda_1^B+\lambda^b_\ell+\lambda_{1,\ell}^{Ab}  \nonumber\\
			\log y_{1,1,k,\ell}^{(s)} &=& \lambda_0+\lambda_1^A+\lambda_1^B+\lambda^a_k+\lambda^b_\ell+\lambda^{Ab}_{1,\ell}+\lambda^{Ba}_{1,k}+\lambda^{ab}_{k,\ell}\nonumber\\
			\log y_{1,0,1,1}^{(s)} &=& \lambda_0+\lambda_1^A\nonumber\\
			\log y_{1,0,k,1}^{(s)} &=& \lambda_0+\lambda_1^A + \lambda^a_k \nonumber\\
			\log y_{1,0,1,\ell}^{(s)} &=& \lambda_0+\lambda_1^A +\lambda_\ell^b+\lambda^{Ab}_{1,\ell} \nonumber\\
			\log y_{1,0,k,\ell}^{(s)} &=& \lambda_0+\lambda_1^A +\lambda^a_k+\lambda^b_\ell+\lambda^{Ab}_{1,\ell}+\lambda^{a,b}_{k,\ell} \nonumber\\
			\log y_{0,1,1,1}^{(s)} &=& \lambda_0+\lambda^B_1\nonumber\\
			\log y_{0,1,k,1}^{(s)} &=& \lambda_0+\lambda^B_1+\lambda^a_k+\lambda^{Ba}_{1,k}\nonumber\\
			\log y_{0,1,1,\ell}^{(s)} &=& \lambda_0+\lambda^B_1+\lambda^b_\ell\nonumber\\
			\log y_{0,1,k,\ell}^{(s)} &=& \lambda_0+\lambda^B_1+\lambda^a_k+\lambda^{Ba}_{1,k}+\lambda^b_\ell+\lambda^{ab}_{k,\ell}\nonumber\\
			\log y_{0,0,1,1}^{(s)} &=& \lambda_0\nonumber\\
			\log y_{0,0,k,1}^{(s)} &=& \lambda_0+\lambda^a_k \nonumber\\
			\log y_{0,0,1,\ell}^{(s)} &=& \lambda_0+\lambda^b_\ell \nonumber\\
			\log y_{0,0,k,\ell}^{(s)} &=& \lambda_0+\lambda^a_k+\lambda^b_\ell+\lambda^{ab}_{k,\ell}
		\end{eqnarray}
		for $2\leq k\leq n_A$, $2\leq \ell\leq n_B$. To see this, first focus on indices of $Ind_{1,1}$, and consider the substitution
		\begin{eqnarray}
			v_0&:=& \lambda_0+\lambda_1^A+\lambda_1^B\nonumber\\
			v^a_{k}&:=& \lambda^a_k+\lambda_{1,k}^{Ba}\nonumber\\
			v^b_\ell &:=& \lambda^b_\ell+\lambda_{1,\ell}^{Ab}\nonumber\\
			&\text{for }&2\leq k\leq n_A,\, 2\leq \ell\leq n_B,
		\end{eqnarray}
		which leads to the equations in the first four lines of (\ref{eq:full_logmean_equations}) to be rewritten as
		\begin{eqnarray}\label{eq:match_matrix_eq}
			\begin{bmatrix}
				1 & 0\,\dots & \dots & \dots\,0  \\[1em]
				\mathbf{1}_{n_A-1} &  \mathbb{I}_{n_A-1} &  \mathbf{0} &  \mathbf{0}   \\[1em]
				\mathbf{1}_{n_B-1} & \mathbf{0} & \mathbb{I}_{n_B-1}  &  \mathbf{0} \\[1em]
				\mathbf{1}_{(n_A-1)(n_B-1) } & \mathbf{M}_1 & \mathbf{M}_2 &  \mathbb{I}_{(n_A-1)(n_B-1)}
			\end{bmatrix}
			\begin{bmatrix}
				v_0\\[0.5em]
				v^a_2\\[0.5em]
				\vdots\\[0.5em]
				v^a_{n_A}\\[0.5em]
				v^b_2\\[0.5em]
				\vdots\\[0.5em]
				v^b_{n_B}\\[0.5em]
				\lambda^{ab}_{2,2}\\[0.5em]
				\vdots\\[0.5em]
				\lambda^{ab}_{n_A,n_B}
			\end{bmatrix}
			=
			\begin{bmatrix}
				\log y_{1,1,1,1}^{(s)} \\[0.5em]
				\log y_{1,1,2,1}^{(s)} \\[0.5em]
				\vdots \\[0.5em]
				\log y_{1,1,n_A,1}^{(s)} \\[0.5em]
				\log y_{1,1,1,2}^{(s)} \\[0.5em]
				\log y_{1,1,1,n_B}^{(s)} \\[0.5em]
				\log y_{1,1,2,2}^{(s)} \\[0.5em]
				\vdots \\[0.5em]
				\log y_{1,1,n_A,n_B}^{(s)}
			\end{bmatrix}
		\end{eqnarray}
		where for any $c\in \mathbb{N}$, $\mathbf{1}_c$ is the $c\times 1$ dimensional vectors where every entry is 1, $\mathbb{I}_c$ is the identity matrix of size $c\times c $, $\mathbf{0}$'s are matrices of all 0's and of appropriate dimensions, and $\mathbf{M}_1$ and $\mathbf{M}_2$ are some matrices. The matrix on the left of (\ref{eq:match_matrix_eq}) is clearly non-singular, as it is lower-triangular and all diagonal entries are 1. Therefore, there is a unique solution 
		\begin{equation}
			v_0^*, v_2^{a*},\dots, v_{n_A}^{a*}, v_2^{b*},\dots ,v_{n_B}^{b*},\lambda_{2,2}^{ab*},\dots,\lambda_{n_A,n_B}^{ab*}.
		\end{equation}
		to the linear system of equations in (\ref{eq:match_matrix_eq}). To find a solution to the rest of the $\lambda$ variables, first note that lines 5-12 of (\ref{eq:full_logmean_equations}) are equivalent to
		\begin{eqnarray}\label{eq:logycomp1}
			\log y_{1,0,1,\ell}^{(s)} &=& \log y_{1,1,1,\ell}^{(s)}-\lambda_1^B\nonumber\\
			\log y_{1,0,k,\ell}^{(s)} &=& \log y_{1,1,k,\ell}^{(s)}-\lambda^B_1-\lambda^{Ba}_{1,k}
		\end{eqnarray}
		for all $2\leq k\leq n_A$ and $1\leq \ell \leq n_B$, and
		\begin{eqnarray}\label{eq:logycomp2}
			\log y_{0,1,k,1}^{(s)} &=&\log y_{1,1,k,1}^{(s)}-\lambda_1^A\nonumber\\
			\log y_{0,1,k,\ell}^{(s)}&=&\log y_{1,1,k,\ell}^{(s)}-\lambda_1^A-\lambda_{1,\ell}^{Ab}
		\end{eqnarray}
		for all $1\leq k\leq n_A$ and $2\leq \ell \leq n_B$. Therefore the following set of solutions
		\begin{eqnarray}
			\lambda_1^{B*} = \log \frac{y^{(s)}_{1,1,1,:}}{y^{(s)}_{1,0,1,:}}\qquad\qquad
			\lambda_{1,k}^{Ba*} = \log \frac{ y^{(s)}_{1,1,k,:} }{ y^{(s)}_{1,0,k,:}} -\lambda_1^{B*} 
		\end{eqnarray}
		for $2\leq k\leq n_A$, plugged into right side of(\ref{eq:logycomp1}) would yield
		\begin{eqnarray}
			\log y_{1,1,1,\ell}^{(s)}-\lambda_1^{B*} &=& \log \left( y_{1,1,1,\ell}^{(s)}\frac{y_{1,0,1,:}^{(2)}}{ y^{(s)}_{1,1,1,:}}  \right)=y_{1,0,1,\ell}^{(s)}\nonumber\\
			\log y^{(s)}_{1,1,k,\ell}-\lambda_1^{B*}-\lambda_{1,k}^{Ba*} &=& \log \left( y_{1,1,k,\ell}^{(s)}\frac{ y_{1,0,k,:}^{(s)} }{ y_{1,1,k,:}^{(s)} } \right)=y_{1,0,k,\ell}^{(s)}
		\end{eqnarray}
		for all such $k$ and all $1\leq \ell\leq n_B$, where the second equality on each line is due to the definition given in (\ref{eq:unmatch_B0_fixed_coords}). Similarly, 
		\begin{eqnarray}
			\lambda_1^{A*} = \log \frac{y^{(s)}_{1,1,:,1}}{y^{(s)}_{0,1,:,1}}\qquad\qquad
			\lambda_{1,\ell}^{Ab*} = \log \frac{ y^{(s)}_{1,1,:,\ell} }{ y^{(s)}_{0,1,:,\ell}} -\lambda_1^{A*} 
		\end{eqnarray}
		for $2\leq \ell\leq  n_B$, plugged into the right side of (\ref{eq:logycomp2})  would lead to 
		\begin{eqnarray}
			\log y_{1,1,k,1}^{(s)}-\lambda_1^{A*}&=& \log\left( y_{1,1,k,1}^{(s)}\frac{y_{0,1,:,1}^{(2)}}{ y^{(s)}_{1,1,:,1}}  \right)=y_{0,1,k,1}^{(s)}\nonumber\\
			\log y^{(s)}_{0,1,k,\ell}-\lambda_1^{A*}-\lambda_{1,\ell}^{Ab*} &=& \log \left( y_{1,1,k,\ell}^{(s)}\frac{ y_{0,1,:,\ell}^{(s)} }{ y_{1,1,:,\ell}^{(s)} } \right)=y_{0,1,k,\ell}^{(s)}
		\end{eqnarray}
		for all such $\ell $ and $1\leq k\leq n_A$, using the definitions of (\ref{eq:unmatch_A0_fixed_coords}) for the second equality on each line. 
		\newline
		\newline
		\indent Altogether, and with the addition of letting $\lambda_{0}^*=v^*-\lambda_1^{A*}-\lambda_1^{B*}$, this gives values
		\begin{eqnarray}
			\boldsymbol{\Lambda}^*=\left( \lambda_0^*,\lambda_1^{A*},\dots,\lambda_{n_A,n_B}^{ab*} \right)
		\end{eqnarray}
		that satisfy lines 1 to 12 of (\ref{eq:full_logmean_equations}). Lines 13 to 16 are then immediately also satisfied, since these lines of (\ref{eq:full_logmean_equations}) leads to the equalities
		\begin{eqnarray}
			\log y_{0,0,k,\ell}^{(s)}=\log y_{1,0,k,\ell}^{(s)}-\log y_{0,1,k,\ell}^{(s)} -\log y_{1,1,k,\ell}^{(s)}
		\end{eqnarray}
		for $1\leq k\leq n_A$ and $1\leq \ell\leq n_B$, which completely agree with how the $y_{0,0,k,\ell}^{(s)}$'s are defined in (\ref{eq:unmatch_fixed_coords}). 
		\newline\newline
		\indent The $PR$ step of the EM algorithm is a Poisson regression involving the the response values $y_{i,j,k,\ell}^{(s)}$'s and the covariate matrix $\mathbf{V}$, whose rows are defined as the coefficients of the $\lambda$'s in (\ref{eq:full_logmean_equations}), or equivalently, as $VM(\mathbf{v})$ for $\mathbf{v}\in Ind_{full}$ as $VM$ was described in (\ref{eq:counts_log_means}). For finding the fit $\boldsymbol{\Lambda}$, the previous steps show there is an actual equality of 
		\begin{equation}\label{eq:Lambdastar_logy}
			\mathbf{V}\boldsymbol{\Lambda}^*= \log \mathbf{y}^{(s)}
		\end{equation}
		In addition, it can be verified that the rows of $\mathbf{V}$ span all of $\mathbb{R}^{(n_An_B+n_A+n_B)}$, so it can be concluded using Lemma \ref{lem:poisson_solution} that $\boldsymbol{\Lambda}^*$ is the solution of the Poisson regression:
		\begin{equation}
			PR\left(\mathbf{y}^{(s)}\right)=\boldsymbol{\Lambda}^*
		\end{equation}
		However from (\ref{eq:Lambdastar_logy}):
		\begin{eqnarray}
			UE\left( \boldsymbol{\Lambda}^* \right)=\exp\left[\mathbf{V} \Lambda^*\right] =\mathbf{y}^{(s)}.
		\end{eqnarray}
		This shows that $UE\left(PR \left( \mathbf{y}^{(s)} \right) \right)=\mathbf{y}^{(s)}$.

	\end{proof}

	\section{Supporting Results}
	\indent In order to prevent difficulty arising out of division by 0 scenarios, a certain operating assumption must hold throughout either the EM algorithm or the new proposed algorithm:
	\begin{enumerate}[label=OP\arabic*]
		\item \label{asn:operational} A set of values $\mathbf{z}=\left(z_\mathbf{v}\right)_{\mathbf{v}\in Ind_{full}} $ satisfies this assumption if 
		\begin{itemize}
			\item $z_\mathbf{v}\geq 0$ for every $\mathbf{v}\in Ind_{full}$
			\item for every $\mathbf{u}\in Ind_{data}$ where $x_\mathbf{u}>0$, there exists a $\mathbf{v}\in CD(\mathbf{u})$ where $z_\mathbf{v}>0$
		\end{itemize}
	\end{enumerate}
	In practice this condition is almost always satisfied throughout the EM algorithm or the proposed faster algorithm, provided either algorithm do not start with an unreasonable starting point (such as letting the starting guess count be 0 for all entries of $Ind_{full}$). To see this, it can be first observed that if $\mathbf{z}=\left(z_\mathbf{v}\right)_{\mathbf{v}\in Ind_{full}} $ satisfies \ref{asn:operational}, then so does $DIST\left( \mathbf{z} \right)$: given a $\mathbf{u}\in Ind_{data}$ such that $x_\mathbf{u}>0$ and $z_{\mathbf{v}^*}>0$ for some $\mathbf{v}^*\in CD(\mathbf{u})$, then the $\mathbf{v}^*$ entry of $DIST\left( \mathbf{z} \right)$ would also be positive: 
	\begin{eqnarray}
		\pi_{\mathbf{v}*}\left( DIST(\mathbf{z}) \right)&=&\sum_{\mathbf{s}\in CD^-(\mathbf{v}^*) } \left[ \left(x_{\mathbf{s}} z_{\mathbf{v}^*} \right)  \left( \sum_{\mathbf{w}\in CD(\mathbf{s})}z_{\mathbf{w}} \right)^{-1} \right]\nonumber\\
		&\geq& x_\mathbf{u} z_{\mathbf{v}^*}\left( \sum_{\mathbf{w}\in CD(\mathbf{u})}z_{\mathbf{w}} \right)^{-1}\nonumber\\
		&&(\mathbf{u}\in CD^-(\mathbf{v}^*)\text{ by definition, since }\mathbf{v}^*\in CD(\mathbf{v})\text{ and }x_{\mathbf{u}}>0)\nonumber\\
		&>& 0
	\end{eqnarray}
	where the last line is due to the summation term on the second line being a positive value (all $z_{ \mathbf{w} }\geq0$, and $z_{\mathbf{v}^*}>0$ is part of the summation). 
	\newline
	\newline
	\indent This shows that the $DIST$ function preserves the \ref{asn:operational} property, and similar arguments can show that this also holds for $DIST_{match}$, $DIST_{A=0}$, and $DIST_{B=0}$. 
	Any algorithm that depends on repeated applications of these functions will have their output at every step keeping the \ref{asn:operational} property, as long as the initial starting point satisfies \ref{asn:operational}. As for the EM algorithm, it has two intermediate steps, $PR$ and $UE$, between applications of the $DIST$ function. Fortunately, this pose no issue, for the $UE$ function outputs entries that are the means of Poisson distributions, which can only be all positive, and hence trivially satisfy \ref{asn:operational}.
	
	\begin{lemma}\label{lem:DIST_marginals}
		Suppose Assumptions \ref{asn:distdivide} and \ref{asn:nonmatchdist} are true. Let $(z_\mathbf{v})_{\mathbf{v}\in Ind_{full}}$ be any set of values satisfying \ref{asn:operational}, upon which if the $DIST$ function is applied would yield the values
		\begin{equation}
			\left(z_\mathbf{v}'\right)_{\mathbf{v}\in Ind_{full}}=DIST\bigg( \left(z_\mathbf{v}\right)_{\mathbf{v}\in Ind_{full}} \bigg).
		\end{equation}
		The marginal totals of these resulting values would be as described in expression (\ref{eq:marginal_totals}): for any $k\in \{1,\dots,n_A\}$ and $\ell\in \{1,\dots, n_B\}$,
		\begin{eqnarray}
			z_{1,0,k,:}' &=& \begin{cases}z_{1,0,k,:}\qquad\qquad\qquad\qquad\qquad\qquad\text{ if } k\notin CD_{B=0}(u)\text{ for all }\mathbf{u}\in Ind_{data}\\
				\sum\limits_{\mathbf{u}\in CD_{B=0}^-(k) } \left[  \left(x_{\mathbf{u}} z_{1,0,k,:}\right) \left(    \sum\limits_{c\in CD_{B=0}(\mathbf{u}) }z_{1,0,c,:}   \right)^{-1} \right]\qquad\text{ otherwise}
			\end{cases}\nonumber\\
			z_{0,1,:,\ell}' &=& \begin{cases}z_{0,1,:,\ell}\qquad\qquad\qquad\qquad\qquad\qquad\text{ if } \ell\notin CD_{A=0}(\mathbf{u})\text{ for all }\mathbf{u}\in Ind_{data}\\
				\sum\limits_{\mathbf{u}\in CD_{A=0}^-(\ell) } \left[  \left(x_{\mathbf{u}} z_{0,1,:,\ell}\right) \left(    \sum\limits_{c\in CD_{A=0}(\mathbf{u}) }z_{0,1,:,c}   \right)^{-1} \right]\qquad\text{ otherwise},
			\end{cases}
		\end{eqnarray}
	\end{lemma}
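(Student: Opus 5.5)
The plan is to sum the coordinatewise formula (\ref{eqn:DIST_rewritten}) for $DIST$ over a row of $Ind_{1,0}$, and then use Assumptions \ref{asn:distdivide} and \ref{asn:nonmatchdist} to collapse the resulting double sum into an expression in the row totals alone. I treat $z'_{1,0,k,:}$ in detail; the column totals $z'_{0,1,:,\ell}$ follow by the symmetric argument with the roles of $A$ and $B$ exchanged.

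\emph{Step 1 (which data cells feed row $k$).} By Assumption \ref{asn:distdivide}, any $\mathbf{u}\in Ind_{data}$ with some $(1,0,k,\ell)\in CD(\mathbf{u})$ has first two entries $(1,0)$, and every element of $CD(\mathbf{u})$ lies in $Ind_{1,0}$. By Assumption \ref{asn:nonmatchdist}, $(1,0,k,\ell)\in CD(\mathbf{u})$ for one $\ell$ forces $(1,0,k,c)\in CD(\mathbf{u})$ for every $c$. Consequently:
\begin{itemize}
\item $CD(\mathbf{u})=\{(1,0,c,b): c\in CD_{B=0}(\mathbf{u}),\ 1\le b\le n_B\}$;
\item $\mathbf{u}\in CD^-((1,0,k,\ell))$ if and only if $\mathbf{u}\in CD^-_{B=0}(k)$, and this condition does not depend on $\ell$.
\end{itemize}

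\emph{Step 2 (the uncovered case).} If $k\notin CD_{B=0}(\mathbf{u})$ for all $\mathbf{u}$, then no $(1,0,k,\ell)$ lies in any $CD(\mathbf{u})$, again by Step 1. Hence $DIST$ leaves every such coordinate unchanged, and summing over $\ell$ gives $z'_{1,0,k,:}=z_{1,0,k,:}$.

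\emph{Step 3 (the covered case).} Otherwise every $(1,0,k,\ell)$ is covered by some $CD(\mathbf{u})$, so the second branch of (\ref{eqn:DIST_rewritten}) applies to each of them. The denominators are
\[
\sum_{\mathbf{w}\in CD(\mathbf{u})}z_\mathbf{w}=\sum_{c\in CD_{B=0}(\mathbf{u})}z_{1,0,c,:},
\]
by the description of $CD(\mathbf{u})$ in Step 1. Assumption \ref{asn:operational} ensures these denominators are positive whenever $x_\mathbf{u}>0$; this is why $CD^-$ only includes cells with $x_\mathbf{u}>0$. Since the index set $CD^-_{B=0}(k)$ is independent of $\ell$, we may swap the sums over $\ell$ and $\mathbf{u}$, and $\sum_\ell z_{1,0,k,\ell}=z_{1,0,k,:}$ produces the stated formula.

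\emph{Step 4 (edge case).} A subtle point is a row $k$ that is covered only by cells with $x_\mathbf{u}=0$. There $DIST$ returns an empty sum, so $z'_{1,0,k,:}=0$, which matches the second branch of the lemma, also an empty sum. The case split in the lemma is phrased in exactly this way, so the two agree.

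The only real obstacle is the bookkeeping in Step 1: showing that $CD(\mathbf{u})$ is a full product set of rows, so that its total is a sum of row totals. This relies on the interpretation of $CD_{B=0}$ through the column $\ell=1$, which is made valid by Assumption \ref{asn:nonmatchdist}.
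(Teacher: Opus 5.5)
Your proposal is correct and follows essentially the same route as the paper's proof. Both handle the uncovered row via Assumption \ref{asn:nonmatchdist} and sum the $DIST$ formula along the row. Both use \ref{asn:nonmatchdist} to make $CD^-(1,0,k,\ell)$ independent of $\ell$, swap the sums, and use \ref{asn:distdivide} and \ref{asn:nonmatchdist} to identify $CD(\mathbf{u})$ with a full union of rows, so that each denominator becomes a sum of row totals.

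Your explicit remarks on two points are small clarifications that the paper leaves implicit: the positivity of denominators under \ref{asn:operational}, and rows covered only by cells with $x_\mathbf{u}=0$, where both sides are empty sums.
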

	\begin{proof}
		Suppose that for some $k\in\{1,\dots ,n_A\}$, $k\notin CD_{B=0}(\mathbf{u})$ for all $\mathbf{u}\in Ind_{data}$. By definition that means $(1,0,k,1)\notin CD(\mathbf{u})$ for all $\mathbf{u}\in Ind_{data}$, and hence for each $1\leq c\leq n_B$, $(1,0,k,c)\notin CD(\mathbf{u})$ for all $\mathbf{u}\in Ind_{data}$ through assumption \ref{asn:nonmatchdist}. Therefore $DIST$ fixes the coordinates $(1,0,k,c)$ for every $1\leq c\leq n_B$ and
		\begin{equation}
			z_{1,0,k,:}'=\sum_{c=1}^{n_B} z_{1,0,k,c}'=\sum_{c=1}^{n_B} z_{1,0,k,c}=z_{1,0,k,:}
		\end{equation}
		~\newline
		For any $k\in\{1,\dots ,n_A\}$ and $c\in\{1,\dots,n_B\}$ where $k\in CD_{B=0}(\mathbf{u})$ for some $\mathbf{u}\in Ind_{data}$, use the expression (\ref{eqn:DIST_rewritten}) to write 
		\begin{equation}
			z_{1,0,k,c}'=\sum_{ \mathbf{u}\in CD^-(1,0,k,c) }  \left[ \left(x_\mathbf{u} z_{1,0,k,c} \right)  \left( \sum_{\mathbf{w}\in CD(\mathbf{u})}z_\mathbf{w} \right)^{-1} \right].
		\end{equation}
		Summing over $c$ means that for any $k\in\{1,\dots ,n_A\}$,
		\begin{eqnarray}\label{eq:zmarginal_sum}
			z_{1,0,k,:}'&=& \sum_{c=1}^{n_B} \left[ \sum_{ \mathbf{u}:\in CD^-(1,0,k,c) }  \left[ \left(x_\mathbf{u} z_{1,0,k,c} \right)  \left( \sum_{\mathbf{w}\in CD(\mathbf{u})}z_\mathbf{w} \right)^{-1} \right]\right].
		\end{eqnarray}
		Due to Assumption \ref{asn:nonmatchdist}, for every $1\leq c\leq n_B$, $(1,0,k,c)\in CD(\mathbf{u})$ if and only if $(1,0,k,1)\in CD(\mathbf{u})$. In other words, the following equality holds for every $1\leq c\leq n_B$:
		\begin{eqnarray}
			CD^-(1,0,k,1)&=&\left\{ \mathbf{u}\in Ind_{data}:(1,0,k,1)\in CD(\mathbf{u})\,\,\text{and}\,\, x_{\mathbf{u}}>0 \right\}\nonumber\\
			&=&\left\{ \mathbf{u}\in Ind_{data}:(1,0,k,c)\in CD(\mathbf{u})\,\,\text{and}\,\,x_{\mathbf{u}}>0 \right\}\nonumber\\
			&=&CD^-(1,0,k,c).
		\end{eqnarray}
		Therefore (\ref{eq:zmarginal_sum}) is equal to
		\begin{eqnarray}\label{eq:zmarginal_sum2}
			&&\sum_{c=1}^{n_B} \left[ \sum_{ \mathbf{u}\in CD^-(1,0,k,1) }  \left[ \left(x_\mathbf{u} z_{1,0,k,c} \right)  \left( \sum_{\mathbf{w}\in CD(\mathbf{u})}z_\mathbf{w} \right)^{-1} \right]\right]\nonumber\\
			&=& \sum_{ \mathbf{u}\in CD^-(1,0,k,1) }\left[ \sum_{c=1}^{n_B}  \left[ \left(x_\mathbf{u} z_{1,0,k,c} \right)  \left( \sum_{\mathbf{w}\in CD(\mathbf{u})}z_\mathbf{w} \right)^{-1} \right]\right]\nonumber\\
			&=& \sum_{ \substack{ \mathbf{u}\in Ind_{data} \\ \mathbf{u}:\, (1,0,k,1)\in CD(\mathbf{u}) \\\text{and }x_\mathbf{u}>0  } } x_\mathbf{u}\left( \sum_{\mathbf{w}\in CD(\mathbf{u})}z_\mathbf{w} \right)^{-1}\cdot \sum_{c=1}^{n_B}z_{1,0,k,c}\nonumber\\
			&=& \sum_{ \substack{ \mathbf{u}\in Ind_{data} \\ \mathbf{u}:\, k\in CD_{B=0}(\mathbf{u}) \\\text{and }x_\mathbf{u}>0  } } z_{1,0,k,:} x_\mathbf{u}\left( \sum_{\mathbf{w}\in CD(\mathbf{u})}z_\mathbf{w} \right)^{-1}.
		\end{eqnarray}
		Now suppose there is a $\mathbf{u}$ such that $(1,0,k,1)\in CD(\mathbf{u})$. Due to Assumption \ref{asn:distdivide}, the first two entries of $\mathbf{u}$ must be $(1,0)$, thus any element $\mathbf{w}\in CD(\mathbf{u})$ must be of the form $\mathbf{w}=(1,0,c,b)$ for some $(c,b)\in \{1,\dots,n_A\}\times \{1,\dots,n_B\}$, where by Assumption \ref{asn:nonmatchdist}, $(1,0,c,1)\in CD(\mathbf{u})$, meaning $c\in CD_{B=0}(\mathbf{u})$ by the definition of $CD_{B=0}$ as was defined in (\ref{eq:CD_marginal_def}). In other words,
		\begin{eqnarray}\label{eq:CD_marg_inclusion}
			CD(\mathbf{u})   \subseteq  \left\{ (1,0,c,b): c\in CD_{B=0}(\mathbf{u}),b\in\{1,\dots,n_B\} \right\}.
		\end{eqnarray}
		On the other hand, if $c\in CD_{B=0}(\mathbf{u})$, then $(1,0,c,1)\in CD(\mathbf{u})$ by definition, and $(1,0,c,b)\in CD(\mathbf{u})$ for every $b\in \{1,\dots,n_B\}$ by Assumption \ref{asn:distdivide}. This gives the opposite direction of the set inclusion in (\ref{eq:CD_marg_inclusion}) and making the two sets equal. This means the last expression in (\ref{eq:zmarginal_sum2}) is equal to 
		\begin{eqnarray}
			&&\sum_{ \substack{ \mathbf{u}\in Ind_{data} \\ \mathbf{u}:\, k\in CD_{B=0}(\mathbf{u}) \\\text{and }x_\mathbf{u}>0  } } z_{1,0,k,:} x_\mathbf{u} \left( \sum_{ \substack{ c\in CD_{B=0}(\mathbf{u}) \\ 1\leq d \leq n_B } } z_{1,0,c,d} \right)^{-1}\nonumber\\
			&=&\sum_{ \substack{ \mathbf{u}\in CD_{B=0}^-(k) }  } z_{1,0,k,:}x_\mathbf{u}\left( \sum_{c\in CD_{B=0}(\mathbf{u})}z_{1,0,c,:} \right)^{-1}
		\end{eqnarray}
		A very similar set of arguments will also lead to the equality
		\begin{eqnarray}
			z_{0,1,:,\ell}'=\sum_{ \mathbf{u}\in CD_{A=0}^-(\ell)  } z_{0,1,:,\ell}x_\mathbf{u}\left( \sum_{c\in CD_{A=0}(\mathbf{u})}z_{0,1,:,c} \right)^{-1}
		\end{eqnarray}
	\end{proof}

	\begin{lemma}\label{lem:DIST_marginal_fix}
		Suppose Assumptions \ref{asn:distdivide} and \ref{asn:nonmatchdist} are true, and that $\left( z_\mathbf{v} \right)_{\mathbf{v}\in Ind_{data}}$ satisfies \ref{asn:operational}. If $DIST_{B=0}$ fixes the marginal totals:
		\begin{equation}
			DIST_{B=0}\left( z_{1,0,1,:},\dots ,z_{1,0,n_A,:} \right)=\left( z_{1,0,1,:},\dots ,z_{1,0,n_A,:} \right),
		\end{equation}
		then $DIST$ fixes all values of $z_{1,0,k,\ell}$ for all $(k,\ell)\in \{1,\dots n_A\} \times \{1,\dots n_B\}$:
		\begin{equation}
			\pi_{1,0,k,\ell}\left( DIST\left( (z_\mathbf{v})_{\mathbf{v}\in Ind_{full}} \right) \right)=z_{1,0,k,\ell}.
		\end{equation}
		Similarly, if $DIST_{A=0}$ fixes the values of $z_{0,1,:,1},\dots,z_{0,1,:,n_B}$, then $DIST$ fixes $z_{0,1,k,\ell}$ for all $k$ and $\ell$.
	\end{lemma}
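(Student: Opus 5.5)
The plan is to show that, on each row $k$ of the $Ind_{1,0}$ quadrant, $DIST$ acts as multiplication by a single scalar that does not depend on the column index. The fixed-marginal hypothesis then forces that scalar to equal $1$ whenever the row is not identically zero. The argument reuses the computation already carried out in the proof of Lemma \ref{lem:DIST_marginals}, so no new machinery is needed.

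First I dispose of the trivial case. Suppose $k\notin CD_{B=0}(\mathbf{u})$ for every $\mathbf{u}\in Ind_{data}$. Then, exactly as in the first paragraph of the proof of Lemma \ref{lem:DIST_marginals}, Assumption \ref{asn:nonmatchdist} gives $(1,0,k,c)\notin CD(\mathbf{u})$ for every $c$ and every $\mathbf{u}$. Hence the first branch of (\ref{eqn:DIST_rewritten}) applies to each such coordinate, and $DIST$ fixes every $z_{1,0,k,c}$ by definition.

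In the remaining case, the key step is the identity $CD^-(1,0,k,c)=CD^-(1,0,k,1)$ for all $1\leq c\leq n_B$, which was established in the proof of Lemma \ref{lem:DIST_marginals} from Assumption \ref{asn:nonmatchdist}. Substituting it into (\ref{eqn:DIST_rewritten}) gives
\begin{equation*}
\pi_{1,0,k,c}\left(DIST(\mathbf{z})\right)=z_{1,0,k,c}\,R_k,\qquad R_k:=\sum_{\mathbf{u}\in CD^-(1,0,k,1)} x_\mathbf{u}\left(\sum_{\mathbf{w}\in CD(\mathbf{u})} z_\mathbf{w}\right)^{-1}.
\end{equation*}
Here $R_k$ is independent of $c$, and it is well defined because \ref{asn:operational} makes every inner denominator strictly positive. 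Summing over $c$ and applying Lemma \ref{lem:DIST_marginals} yields $z'_{1,0,k,:}=R_k\,z_{1,0,k,:}$. The hypothesis that $DIST_{B=0}$ fixes the marginals then reads $R_k\,z_{1,0,k,:}=z_{1,0,k,:}$.

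This leads to two subcases. If $z_{1,0,k,:}>0$, then $R_k=1$, and consequently $\pi_{1,0,k,c}(DIST(\mathbf{z}))=z_{1,0,k,c}$ for every $c$. If $z_{1,0,k,:}=0$, then nonnegativity from \ref{asn:operational} forces $z_{1,0,k,c}=0$ for all $c$, so the images $z_{1,0,k,c}R_k$ are also $0$ and are again fixed.

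The $Ind_{0,1}$ statement follows by the symmetric argument with the roles of rows and columns exchanged: one uses $CD^-(0,1,c,\ell)=CD^-(0,1,1,\ell)$ and the second formula of Lemma \ref{lem:DIST_marginals}.

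The only genuinely delicate point is the factorization, that is, checking that the sum over $\mathbf{u}$ really is the same for every column $c$. This rests entirely on Assumption \ref{asn:nonmatchdist} and on the positivity of the $x_\mathbf{u}$ built into $CD^-$. I would therefore state this factorization explicitly before dividing out the marginal.
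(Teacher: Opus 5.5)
Your proposal is correct and follows the paper's own proof essentially step for step: the same trivial case for rows $k$ not covered by any $CD_{B=0}(\mathbf{u})$, the same column-independent factorization $\pi_{1,0,k,c}(DIST(\mathbf{z}))=z_{1,0,k,c}R_k$ (the paper calls the scalar $C_k$), and the same use of Lemma \ref{lem:DIST_marginals} with the fixed-marginal hypothesis to get $R_k z_{1,0,k,:}=z_{1,0,k,:}$, split on whether $z_{1,0,k,:}$ vanishes. When you write it up, note that the identity $CD^-(1,0,k,c)=CD^-(1,0,k,1)$ needs Assumption \ref{asn:distdivide} as well as \ref{asn:nonmatchdist}: \ref{asn:distdivide} is what puts any $\mathbf{u}$ with $(1,0,k,c)\in CD(\mathbf{u})$ in the $(1,0)$ block, where \ref{asn:nonmatchdist} applies.
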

	\begin{proof}
		Suppose $1\leq k\leq n_A$ and $1\leq \ell\leq n_B$. Consider the case where $k\notin CD_{B=0}(\mathbf{u})$ for all $\mathbf{u}\in Ind_{data}$, then $(1,0,k,\ell)\notin CD(\mathbf{u})$ for all $\mathbf{u}\in Ind_{data}$, using the same logic used earlier in the proof of Lemma \ref{lem:DIST_marginals}. Then the $DIST$ function fixes the $(1,0,k,\ell)$ coordinate by default.
		\newline
		\newline
		Otherwise, suppose $k\in CD_{B=0}(\mathbb{u})$ for some $\mathbf{u}\in Ind_{data}$. As in the derivations of (\ref{eq:zmarginal_sum}) and (\ref{eq:zmarginal_sum2}):
		\begin{eqnarray}\label{eq:lem2_expansion}
			&&\pi_{1,0,k,\ell}\left( DIST\left( (z_\mathbf{v})_{\mathbf{v}\in Ind_{full}} \right) \right) \nonumber\\
			&=& z_{1,0,k,\ell}\sum_{\mathbf{u}\in CD^-(1,0,k,1)  }x_\mathbf{u} \left( \sum_{\mathbf{w}\in CD(\mathbf{u}) } z_\mathbf{w} \right)^{-1}\nonumber\\
			&=& z_{1,0,k,\ell} C_k
		\end{eqnarray}
		where
		\begin{equation}
			C_k=\sum_{\mathbf{u}\in CD^-(1,0,k,1)  }x_\mathbf{u} \left( \sum_{\mathbf{w}\in CD(\mathbf{u}) } z_\mathbf{w} \right)^{-1}.
		\end{equation}
		This would mean that the marginal totals of $DIST( (z_\mathbf{v})_{Ind_{full}} )$ would be 
		\begin{eqnarray}
			&&\pi_{1,0,k,1}\left( DIST\left( (z_\mathbf{v})_{\mathbf{v}\in Ind_{full}} \right) \right)+\dots+\pi_{1,0,k,n_B}\left( DIST\left( (z_\mathbf{v})_{\mathbf{v}\in Ind_{full}} \right) \right)\nonumber\\
			&=& C_k(z_{1,0,k,1}+\dots +z_{1,0,k,n_B})\nonumber\\
			&=& C_kz_{1,0,k,:}.
		\end{eqnarray}
		At the same time, the results of Lemma \ref{lem:DIST_marginals} shows that the marginal total equal
		\begin{eqnarray}
			&&\pi_{1,0,k,1}\left( DIST\left( (z_\mathbf{v})_{\mathbf{v}\in Ind_{full}} \right) \right)+\dots+\pi_{1,0,k,n_B}\left( DIST\left( (z_\mathbf{v})_{\mathbf{v}\in Ind_{full}} \right) \right)\nonumber\\
			&=& \pi_k  \left( DIST_{B=0}(z_{1,0,l,:},\dots z_{1,0,n_A,:}) \right)\nonumber\\
			&=& z_{1,0,k,:}
		\end{eqnarray}
		In order for $C_kz_{1,0,k,:}=z_{1,0,k,:}$ to be true, $C_k=1$ is necessary if $z_{1,0,k,:}$ is nonzero. In the case that $z_{1,0,k,:}=0$, then $z_{1,0,k,1}=\dots=z_{1,0,k,n_B}=0$ since all these values must be non-negative (per \ref{asn:operational}) and they sum to 0. In either case, (\ref{eq:lem2_expansion}) reduces to
		\begin{equation}
			\pi_{1,0,k,\ell}\left( DIST\left( (z_\mathbf{v})_{\mathbf{v}\in Ind_{full}} \right) \right)=z_{1,0,k,\ell}.
		\end{equation}
	\end{proof}

	\begin{lemma}\label{lem:poisson_solution}
		Suppose there are covariate vectors $\mathbf{v}_i\in \mathbb{R}^p$ and response values $y_i\in \mathbb{R}$ for $i=1,\dots,n$, positive integers $n\geq p$, with the span of $\{\mathbf{v}_i\}_{1\leq i\leq n}$ equaling $\mathbb{R}^p$. If there is a $\beta^*\in\mathbb{R}^p$ where
		\begin{equation}
			\exp\left( \mathbf{v}_i\cdot \beta^* \right)=y_i
		\end{equation}
		for $i=1,\dots,n$, then $\beta^*$ is the unique estimator when performing Poisson regression with $\mathbf{v}_i$'s and $y_i$'s.
	\end{lemma}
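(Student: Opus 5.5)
The plan is to use the standard structure of Poisson regression with a canonical log link: the negative log-likelihood is strictly convex, and a critical point is a global maximizer. Write the objective as
\begin{equation*}
L(\beta)=\sum_{i=1}^n\Big[y_i\,\mathbf{v}_i\cdot\beta-\exp(\mathbf{v}_i\cdot\beta)\Big],
\end{equation*}
which is the function maximized by $PR$ in the M-step, now with generic covariates $\mathbf{v}_i$ and responses $y_i$.

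\textbf{Step 1: $\beta^*$ is a critical point.} The gradient is
\begin{equation*}
\nabla L(\beta)=\sum_{i=1}^n\big(y_i-\exp(\mathbf{v}_i\cdot\beta)\big)\mathbf{v}_i.
\end{equation*}
At $\beta=\beta^*$ every summand vanishes, because $y_i=\exp(\mathbf{v}_i\cdot\beta^*)$ by hypothesis. So $\nabla L(\beta^*)=0$. This step does not need to solve any normal equations; the perfect-fit hypothesis gives it directly.

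\textbf{Step 2: strict concavity.} The Hessian is
\begin{equation*}
\nabla^2L(\beta)=-\sum_{i=1}^n\exp(\mathbf{v}_i\cdot\beta)\,\mathbf{v}_i\mathbf{v}_i^T .
\end{equation*}
For any nonzero $w\in\mathbb{R}^p$, we have $w^T\nabla^2L(\beta)w=-\sum_i \exp(\mathbf{v}_i\cdot\beta)(\mathbf{v}_i\cdot w)^2$. The weights $\exp(\cdot)$ are strictly positive. Since the $\mathbf{v}_i$ span $\mathbb{R}^p$, $w$ cannot be orthogonal to all of them, so some $(\mathbf{v}_i\cdot w)^2>0$. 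Hence the Hessian is negative definite at every $\beta$, and $L$ is strictly concave on $\mathbb{R}^p$.

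\textbf{Step 3: conclusion.} For a strictly concave differentiable function, a critical point is the unique global maximizer. Concretely, $L(\beta)\le L(\beta^*)+\nabla L(\beta^*)\cdot(\beta-\beta^*)=L(\beta^*)$, with equality only at $\beta=\beta^*$. Therefore $\beta^*=\arg\max L$, and this is exactly the Poisson regression estimator $PR$.

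The only step needing care is the spanning argument in Step 2, which is what gives uniqueness. Without the spanning condition, the Hessian would be only semidefinite and the maximizer would be unique only modulo the null space of the design. The lemma is stated for real $y_i$; this causes no problem, since $y_i=\exp(\mathbf{v}_i\cdot\beta^*)$ forces $y_i>0$ and the objective stays well defined without factorials.
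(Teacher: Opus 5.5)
Your proof is correct and follows essentially the same route as the paper: the perfect-fit hypothesis makes the gradient vanish at $\beta^*$, and the spanning condition makes the Hessian $-\sum_i \exp(\mathbf{v}_i\cdot\beta)\mathbf{v}_i\mathbf{v}_i^T$ negative definite everywhere, so $\beta^*$ is the unique maximizer. The paper leaves the step from negative definiteness to a unique global maximizer implicit. Your tangent-plane inequality in Step 3 makes that step explicit.
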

	\begin{proof}
		The estimator of the Poisson regression is the minimizer of the log-likelihood function, which is
		\begin{eqnarray}
			L\left(\beta\right)=\sum_{i=1}^n\bigg( y_i\beta\cdot \mathbf{v}_i-\exp(\beta\cdot \mathbf{v}_i)-\log(y_i!) \bigg).
		\end{eqnarray}
		The first and second derivatives are
		\begin{eqnarray}
			\nabla_\beta L&=&  \sum_{i=1}^n\bigg( y_i\mathbf{v}_i- \exp(\beta\cdot \mathbf{v}_i) \mathbf{v}_i \bigg)\nonumber\\
			\nabla^2_\beta L &=& -\sum_{i=1}^n \exp(\beta\cdot \mathbf{v}_i)\mathbf{v}_i\mathbf{v}_i^T 
		\end{eqnarray}
		If there is a $\beta^*$ such that $\exp\left( \mathbf{v}_i\cdot \beta^* \right)=y_i$ for all $i$, then the first derivative would be the 0 vector at $\beta^*$. The second derivative can be seen as negative definite; it is a linear combination of rank-1 matrices so $\mathbf{u}^T	\nabla^2_\beta L\,\mathbf{u}\leq 0$ for any nonzero $\mathbf{u}\in \mathbb{R}^p$, and $\mathbf{u}^T	\nabla^2_\beta L\,\mathbf{u}$ cannot equal $0$ since the $\mathbf{v}_i$'s linearly span all of $\mathbb{R}^p$. Therefore $L(\beta)$ has a unique maximizer at $\beta^*$.
	\end{proof}

	\printbibliography
\end{document}